\numberwithin{equation}{section}
\theoremstyle{definition}
\newtheorem{theorem}{Theorem}[section]
\newtheorem{lemma}[theorem]{Lemma}
\newtheorem{proposition}[theorem]{Proposition}
\newtheorem{example}[theorem]{Example}
\newtheorem{definition}[theorem]{Definition}
\newtheorem{remark}[theorem]{Remark}
\newtheorem{corollary}[theorem]{Corollary}
\newcommand{\iso}{\cong}
\newcommand{\arr}{\rightarrow}
\newcommand{\R}{\mathbb{R}}
\newcommand{\C}{\mathbb{C}}
\newcommand{\Z}{\mathbb{Z}}
\newcommand{\N}{\mathbb{N}}
\newcommand{\Id}{\mathbbm{1}}
\newcommand{\mcA}{\mathcal{A}}
\newcommand{\mcB}{\mathcal{B}}
\newcommand{\mcC}{\mathcal{C}}
\newcommand{\mcG}{\mathcal{G}}
\newcommand{\mcI}{\mathcal{I}}
\newcommand{\mcL}{\mathcal{L}}
\newcommand{\mcO}{\mathcal{O}}
\newcommand{\mcS}{\mathcal{S}}
\newcommand{\mcU}{\mathcal{U}}
\newcommand{\mcV}{\mathcal{V}}
\newcommand{\mbC}{\mathbb{C}}
\newcommand{\mbR}{\mathbb{R}}
\newcommand{\msA}{\mathscr{A}}
\newcommand{\msB}{\mathscr{B}}
\newcommand{\msC}{\mathscr{C}}
\newcommand{\msF}{\mathscr{F}}
\newcommand{\msG}{\mathscr{G}}
\newcommand{\msR}{\mathscr{R}}
\newcommand{\msS}{\mathscr{S}}
\newcommand{\msU}{\mathscr{U}}
\newcommand{\sgn}{\textnormal{sgn}}
\newcommand{\tr}{\text{tr}}
\begin{document}

\title[Rounding near-optimal quantum strategies]{Rounding near-optimal quantum
strategies for nonlocal games to strategies using a maximally entangled state}

\author[C.~Paddock]{Connor Paddock}
\address{Department of Mathematics and Statistics, University of Ottawa, Canada}
\email{cpaulpad@uottawa.ca}

\begin{abstract}
We establish approximate rigidity results for several well-known families of nonlocal
games. In particular, we show that near-perfect quantum strategies for boolean constraint
system (BCS) games are approximate representations of the corresponding BCS algebra. Likewise,
for the class of XOR nonlocal games, we show that near-optimal quantum strategies are
approximate representations of the corresponding $*$-algebra associated with optimal
quantum values for the game. In both cases, the approximate representations are with
respect to the normalized Hilbert-Schmidt norm and independent of the Hilbert space or
quantum state employed in the strategy.

We also show that approximate representation of the BCS (resp.~XOR-algebra) yields
measurement operators for near-perfect (resp.~near-optimal) quantum strategies where the
players employ a maximally entangled state in the game. As a corollary, every near-perfect
BCS (resp.~near-optimal XOR) quantum strategy is close to a near-perfect
(resp.~near-optimal) quantum strategy using a maximally entangled state. Lastly, we
establish that every synchronous algebra is $*$-isomorphic to a certain BCS algebra called
the SynchBCS algebra. This allows us to apply our BCS rigidity results to the class of
synchronous games as well.
\end{abstract}

\maketitle


\section{Introduction}\label{sec:intro}

A two-player nonlocal game is a scenario involving two players, commonly referred to as
Alice and Bob, and a referee. In the game, the referee sends each player a question, and
each player replies with an answer. The players are unable to communicate once the game
begins. However, they may share a bipartite quantum state and perform measurements on the
state as part of their strategy. The players win if their answers satisfy the \emph{rule
predicate}, otherwise they lose. The rule predicate is known to the players beforehand
allowing them to predetermine their strategy.

It is well-known that there are nonlocal games where by using entanglement the players can
win with a higher probability than if they had only classical resources. There are even
examples of nonlocal games where the players can win \emph{perfectly} (with probability
one) using an entangled strategy, while any classical strategy for the game has a nonzero
losing probability. However, determining if a nonlocal game admits some quantum advantage
is not easy. Not only can it be hard to find the optimal winning probability amongst
classical strategies, a quantity known as the \emph{classical value}, it was recently
established that deciding if the optimal winning probability for a nonlocal game amongst
all entangled strategies, a quantity known as the \emph{entangled value}, is $1$ or
greater than $1/2$ is equivalent to the halting problem \cite{JNVWY22}.

Despite the computational hardness surrounding and entangled value; for several classes of
nonlocal games, the existence of optimal and or perfect quantum strategies can be
characterized in purely algebraic terms. This notion is often referred to as the
``rigidity property'' of the optimal strategies for a nonlocal game. Abstractly, the
rigidity relations amongst the observables in an optimal strategy can be viewed as
generators and relations of a finitely presented $*$-algebra associated with the nonlocal
game. By construction, the finite-dimensional representations of these nonlocal game
algebras yield quantum strategies that obtain the quantum (or commuting operator) value
for such games.

The first instance of this correspondence is in the context of XOR nonlocal games through
the algebraic characterization of specific quantum correlations in two-output Bell
scenarios and is attributed to Tsirelson \cite{Tsi85}. Tsirelson's result implies that
the optimal quantum strategies for XOR games are representations of a certain finitely
presented $*$-algebra with a finite-dimensional tracial state \cite{Tsi87,Slof11}. There
are now several families of nonlocal games for which a correspondence between optimal
quantum strategies and representations of the nonlocal game algebra has been established.
This includes the class of $\Z_2$-linear constraint systems (LCS) games \cite{CM14,
CLS17}, the class of synchronous games \cite{PSSTW16}, boolean constraint system (BCS)
games \cite{CM14,Ji13}, and the general class of imitation games \cite{Lup20}.

A particularly useful application of this correspondence is in providing lower bounds on
the amount of entanglement required to achieve the quantum value of the game, for
example, \cite{Slof11,Slof18}. Another application of this correspondence is enabling an
observer to verify information about the quantum measurements and states employed by the
players when they achieve the optimal winning probability in a game. This concept is more
formally known as self-testing and is an important ingredient in device-independent
cryptography, see for example \cite{MY03,Wu16,BSCA18a,BSCA18b,Kan19}.

With these applications in mind, it is natural to wonder how the correspondence between
optimal quantum strategies and representations is affected by noise. More precisely, we
say a quantum strategy is \emph{$\epsilon$-optimal} if the probability of winning is at
most $\epsilon$-away from the entangled value. In the case where the entangled value is
one, we say that a quantum strategy is \emph{$\epsilon$-perfect} if it wins with
probability at least $1-\epsilon$. In this work, we focus on the case where $\epsilon$ is
significantly less than the smallest joint question probability\footnote{When this is not
the case, the noise from the state is indistinguishable from a strategy employing losing
answers and other issues arise which we do not explore here.}. In this regime, we will see
that $\epsilon$-optimal strategies correspond to \emph{approximate representations}.
Informally, an approximate representation or $\epsilon$-representation of a
finitely presented $*$-algebra is a function from the generators to matrices where the
defining relations hold approximately. The parameter $\epsilon>0$ measures how far,
according to some metric, the relations are from being satisfied.

There are already several previous results about \emph{approximate rigidity} in the
literature. In \cite{Slof11}, Slofstra showed that the correspondence between optimal
quantum strategies and representations of the XOR-algebra is robust, in the sense that
$\epsilon$-optimal strategies are $O(\epsilon^{1/8}d^{2/3})$-representations of the
XOR-algebra, where $d$ is the dimension of the local strategy Hilbert space $H_A$ (or
equivalently $H_B$) supporting the quantum strategy. In the case of $\Z_2$-linear
constraint systems (LCS) nonlocal games, Slofstra and Vidick established that
$\epsilon$-perfect quantum strategies correspond to unitary
$O(\epsilon^{1/4})$-representations of the associated solution group \cite{SV18}. Unlike
in the XOR game case, for LCS games the quality of the approximate representation does not
depend on the Hilbert space or the state in the quantum strategy. This independence is a
much-desired property in the context of device independence.

Our main result is that the correspondence between perfect (or optimal in the XOR case)
quantum strategies and representations is robust and Hilbert space independent for the
class of boolean constraint system (BCS), synchronous, and XOR nonlocal games.

\begin{theorem}\label{thm:main}\
\begin{enumerate}
\item Each $\epsilon$-perfect strategy for a BCS nonlocal game corresponds to an
$O(\epsilon^{1/4})$-representation of the BCS algebra. \item Each $\epsilon$-perfect
strategy for a synchronous nonlocal game corresponds to an
$O(\epsilon^{1/8})$-representation of the synchronous algebra. \item Each
$\epsilon$-optimal strategy for an XOR nonlocal game corresponds to an
$O(\epsilon^{1/8})$-representation of the XOR algebra.
\end{enumerate}
\end{theorem}

The precise definitions of a $\epsilon$-perfect and optimal strategies are given in
\cref{sec:nonlocal_games}, along with the definitions of the BCS algebra,
and the XOR algebra, while approximate representation are fomally defined in
\cref{sec:approx_reps} \cref{defn:eps_rep}. The more precise statements of \cref{thm:main}
are stated in \cref{prop:res1}, \cref{prop:res2}, and \cref{prop:res3}. All of the
approximate representations are measured with respect to the little Frobenius norm
$\|\cdot\|_f$. In particular,\cref{thm:main}(1) can be seen as a generalization of the
result in \cite{SV18} to the class of more general BCS nonlocal games. \cref{thm:main}(2)
provides an alternative proof of the result in \cite{Vid22} in the case of games. While
\cref{thm:main}(3) can be seen as an improvement of the result in \cite{Slof11}, as it
removed the Hilbert space dependence in the approximate representation.

The proof of the first theorem consists of two parts. First, we establish that every
near-perfect (resp. near-optimal) strategy is an approximate representation of the BCS
(resp. XOR) algebra with respect to a particular state-dependent semi-norm. This
state-dependent semi-norm is determined by the quantum state employed as part of the
quantum strategy used by the players. The second step involves showing that each
state-dependent approximate representation can be ``rounded'' to an approximate
representation in the little Frobenius norm. This removes the state/dimension dependence
in the approximate representation. The rounding of the state-dependent approximate
representations from near-optimal quantum strategies to an approximate representation in
the little Frobenius norm is achieved through \cref{lem:rounding}, which builds on the
techniques developed in \cite{SV18} in the group setting.

Although our results do not depend on the dimension of the approximate representation, the
approximate representations do depend on the properties of the game algebra. In
particular, this means that approximate representations may depend (exponentially in some
cases) on the size of the question and answer sets from the nonlocal game. This means that
although the techniques apply to fixed games, they do not apply to families of games with
these parameters. We leave the problem of tightly analyzing this dependence for future
work.

Another contribution of this work is furthering the connection between synchronous games
and boolean constraint system (BCS) games. In \cref{prop:iso}, we establish that the
synchronous algebra is isomorphic to the BCS algebra of a certain BCS nonlocal game we
call the SynchBCS game. This isomorphism allows us to extend several of our results to the
class of synchronous algebras. Connections between synchronous and BCS games have been
previously noted, but have focussed exclusively on the case where the constraints are all
linear, for example \cite{KPS18,Gol21,Fri20}.

Our second main result is a converse to \cref{thm:main}. In particular, we show that
approximate representations for certain game algebras are close to near-perfect
(near-optimal in the XOR case) quantum strategies employing a maximally-entangled state.

\begin{theorem}\label{thm:second}\
\begin{enumerate}
\item Each $\epsilon$-representation of the BCS algebra is close to an
$O(\epsilon^2)$-perfect strategy for the corresponding BCS nonlocal game employing a
maximally entangled state. \item Each $\epsilon$-representation of the synchronous algebra
is close to an $O(\epsilon^2)$-perfect strategy for the synchronous nonlocal game where
the players employ a maximally entangled state. \item Each $\epsilon$-representation of
the XOR algebra is close to an $O(\epsilon^2)$-optimal strategy $\mcS$ for the XOR
nonlocal game where the players employ a maximally entangled state.
\end{enumerate}
\end{theorem}

By \emph{close} we mean that each measurement operator in the strategy is at most
$O(\epsilon)$-away from the generators in the approximate representation. Because all of
the above algebras are quotients of group algebras, we assume that the operator norm of
each generator in the approximate representation is at most one. In this case the
$O(\epsilon)$ is entirely independent of the Hilbert space and depends only on the
presentation of the nonlocal game algebra. The formal statements for \cref{thm:second} are
found in \cref{prop:BCS_alg}, \cref{prop:synch_alg}, and \cref{prop:XOR_alg}. As a
consequence of the proofs of \cref{thm:main} and \cref{thm:second} we obtain the following
corollary:

\begin{corollary}\label{cor:main}\
\begin{enumerate}
\item Each $\epsilon$-perfect quantum strategy for a BCS nonlocal game is
$O(\epsilon^{1/4})$-away from an $O(\epsilon^{1/2})$-perfect quantum strategy using a
maximally entangled state. \item Each $\epsilon$-perfect quantum strategy $\mcS$ for a
synchronous nonlocal game is $O(\epsilon^{1/8})$-away from an $O(\epsilon^{1/4})$-perfect
quantum strategy using a maximally entangled state. \item Each $\epsilon$-optimal quantum
strategy $\mcS$ for an XOR nonlocal game is $O(\epsilon^{1/8})$-away from an
$O(\epsilon^{1/4})$-optimal quantum strategy using a maximally entangled state.
\end{enumerate}\end{corollary}

Notably, \cref{cor:main} shows that you can reduce the analysis of near-perfect
(near-optimal in the XOR case) strategies with an arbitrary state to the analysis of
near-perfect (near-optimal in the XOR case) strategies with a maximally-entangled state
without amplifying the error too much.

Recently, independent but similar results to \cref{thm:main}(2) for approximate synchronous
correlations were established in \cite{Vid22}. Both results are based on techniques in
\cite{SV18} but we emphasize that this work takes a more algebraic perspective and focuses
on extending the techniques to arbitrary BCS games. One advantage of their result is that
it applies to the more general case of correlations (not just strategies). However, none
of their results apply to the case of XOR games as they are far from synchronous.

The remainder of the paper is outlined as follows: \cref{sec:prelims} covers some
mathematical preliminaries, \cref{sec:approx_reps} covers the relevant concepts and
results from Approximate Representation Theory we use in the work, including the key
rounding result (\cref{lem:rounding}), and \cref{sec:nonlocal_games} defines the nonlocal
game algebras associated with BCS, synchronous, and XOR games, while also examining the
connection between approximate representations and near-optimal strategies for these
games.


\section{Preliminaries}\label{sec:prelims}

Let $S$ be a finite set, we let $\C^*\langle S\rangle$ denote the free (complex)
$*$-algebra generated by $S$. Let $R\subset \C^*\langle X\rangle$ be a finite
collection of elements (noncommutative $*$-polynomials) from $\C^*\langle
S\rangle$. The \textbf{finitely-presented $*$-algebra} $\msA=\C^*\langle S:
R\rangle$ is the quotient of $\C^*\langle S\rangle$ by $\langle\langle
R\rangle\rangle$, where $\langle\langle R\rangle\rangle$ is the smallest
(two-sided) $*$-ideal containing $R$. We call the pair $(S,R)$ a finite
\textbf{presentation} of the $*$-algebra $\msA$. We write $\Id$ to represent the
unit in a $*$-algebra. A priori, elements of $\msA=\mbC^*\langle S:R\rangle$ are
not bounded in representations. To address this, let $\msA_{sa}=\{a\in
\msA:a=a^*\}$ be the $*$-subalgebra of self-adjoint elements. Define the subset
of \textbf{sums-of-squares (SOS)} to be the elements $\msA_+=\{a\in
\msA:a=\sum_{k\in K} b_k^*b_k\}$. In the language of \cite{Oza13}, the subset
$\msA_+\subset \msA$ is a \textbf{$*$-positive cone} for $\msA$. The
$*$-positive cone induces a partial order on the self-adoint elements
$\msA_{sa}$, where $a\leq b$ if $a-b\in \msA_+$. The $*$-subalgebra of bounded
elements is defined as $\msA_{bdd}=\{a\in \msA: \exists\ \lambda>0\text{ such
that } a^*a\leq \lambda\Id\}$. A finitely presented $*$-algebra
$\msA=\mbC^*\langle S:R\rangle$ is said to be \textbf{archimedean} if
$\msA=\msA_{bdd}$. In this case, we say that the relations $R$ are archimedean.
For a finitely presented $*$-algebra $\C^*\langle a_1,\ldots,a_n: r_1,\ldots,
r_m\rangle$ being archimedean is equivalent to the ideal generated by $\langle
r_1,\ldots, r_m\rangle $ containing the relation $n\lambda^2\Id-\sum_{i=1}^n
a_ia_i^*$ for some $\lambda>0$, see for instance \cite{HMV25}. Additionally,
whenever $\msA=\msA_{bdd}$ we say that $\msA$ is a
\textbf{semi-pre-$C^*$-algebra}. A \textbf{representation} of $\msA$ is a
$*$-homomorphism $\psi:\msA\to B(H)$, where $B(H)$ are the bounded operators on
a Hilbert space $H$. Note that a finitely-presented $*$-algebra
$\msA=\mbC^*\langle S:R\rangle$ is archimedean if $S\subset \msA_{bdd}$. Remark
that the relation $n\lambda^2\Id-\sum_{i=1}^n a_ia_i^*$ implies that in any
$*$-representation, the image of each generator has operator norm at most
$\lambda$. Given a finitely presented $*$-algebra $\msA$ we let
$\lambda_\msA=\inf_\lambda \{n\lambda^2\Id-\sum_{i=1}^n a_ia_i^*\in
\langle r_1,\ldots, r_m\rangle \}$ denote the \textbf{bounded radius of $\msA$}.

For $A\in M_d(\C)$, $\|A\|_{op}$ denotes the \textbf{operator norm} of $A$, while
$\|A\|_F$ denotes the \textbf{Frobenius (or Hilbert-Schmidt) norm} of $A$. We write
$\|\cdot\|$ when the matrix norm is left unspecified. For a finite-dimensional Hilbert
space $H$, we denote by $\mcL(H)$ the set of linear operators from $H$ to $H$. Whenever
$H\iso \C^d$, we define $d=dim(H)$ and we have that $M_d(\C)\iso\mcL(H)$ is a Hilbert
space with the \textbf{Frobenius (Hilbert-Schmidt) inner-product} $ \langle
A,B\rangle_F:=\tr(A^*B)$, for $A,B\in \mcL(H)$. We also use the \textbf{little Frobenius
(or normalized Hilbert-Schmidt) norm}, denoted by
$\|A\|^2_f:=\tilde{\tr}(A^*A)=\tfrac{1}{d}\|A\|_F^2$, for $A\in M_d(\C)$. The
normalization in the little Frobenius norm ensures that $\|\Id\|_f=1$, in contrast to
$\|\Id\|_F=\sqrt{d}$. It's worth noting that unlike its unnormalized version, the
Frobenius norm $\|\cdot\|_F$, the little Frobenius norm $\|\cdot\|_f$ is \emph{not}
submultiplicative. Nonetheless, if $A,B,C\in M_d(\C)$ we do have the bimodule property
$\|ABC\|_f\leq \|A\|_{op}\|B\|_f\|C\|_{op}$. If $P$ is an orthogonal projection in a matrix algebra $M_d(\C)$, then $PM_d(\C)P$ is the compression of $M_d(\C)$ to the subspace supported on the image of $P$.

A (pure) \textbf{quantum state} $|\psi\rangle$ is a unit vector in a Hilbert space $H$.
Each quantum state $|\psi\rangle$ gives rise to a positive-semidefinite matrix with trace
one, called a \textbf{density matrix} (or sometimes just a state) $\rho$ through the
identification $|\psi\rangle \mapsto |\psi\rangle\langle \psi|:=\rho$. Density matrices
can also represent ensembles or mixtures of pure states. The \textbf{state-induced
semi-norm} (or \textbf{$\rho$-norm}) for a density operators $\rho\in \mcL(H)$ is given by
$\|T\|_\rho:=\|T\rho^{1/2}\|_F=$ for all $T\in \mcL(H)$. The failure of positive
definiteness in the $\rho$-norm is the result of $\rho$ having a $0$-eigenvalue. In the
case where $\rho=\Id/d$, the $\rho$-norm $\|\cdot\|_\rho$ coincides with the little
Frobenius (normalized Hilbert-Schmidt) norm, that is $\|\cdot\|_f=\|\cdot\|_{\Id/d}$.

A bipartite quantum state is a unit vector $|\psi\rangle$ in the tensor product of Hilbert
spaces $H_A\otimes H_B$. A state $|\psi\rangle\in H_A\otimes H_B$ is said to be
\textbf{maximally entangled} if its reduced density matrix $\tr_{H_A}(\rho)=\rho_B$ on
$H_B$ (or $\rho_A=\tr_{H_B}(\rho)$ on $H_A$) is $\Id/\dim(H_B)$ (or $\Id/\dim(H_B)$).
Thus, starting from a maximally entangled state the induced $\rho$-norm on $H_B$ (or
$H_A$) is the little Frobenius norm $\|\cdot\|_f$. Every bipartite vector state has a
\textbf{Schmidt decomposition} $|\psi\rangle =\sum_{i\in I} \alpha_i |u_i\rangle \otimes
|v_i\rangle$, where $\{|u_i\rangle\}_{i\in I}$ and $\{|v_i\rangle\}_{i\in I}$ are
orthonormal subsets of $H_A$ and $H_B$ respectively, and $\alpha_i>0$ for all $i\in I$.
The \textbf{support} of a bipartite vector state $|\psi\rangle$ is the image of
$H_A\otimes H_B$ under the projection $\Pi=\sum_{i\in I} |u_i\rangle\langle u_i| \otimes
|v_i\rangle\langle v_i|$. The support on $H_A$ or $H_B$ are the images under the local
projections $\Pi_A=\sum_{i\in I} |u_i\rangle\langle u_i|$ and $\Pi_B=\sum_{i\in I}
|v_i\rangle\langle v_i|$ respectively. The support Hilbert space is defined by
$Im(\Pi):=\Pi H$. Note that for the maximally entangled state, the local support
projections are $Im(\Id)=H_A$ and $Im(\Id)=H_B$. We denote the maximally entangled state
by $|\tau\rangle=|I|^{-1/2}\sum_{i\in I} |u_i\rangle \otimes |u_i\rangle$. For a self-adjoint matrix
$A^*=A$ we observe that $A\otimes \Id|\tau\rangle=\Id\otimes A^\top|\tau\rangle=\Id\otimes
\overline{A}|\tau\rangle$, where the transpose is taken with respect to the Schmidt basis
of $|\tau\rangle$, since $A^\top=(A^*)^\top=\overline{A}$. Moreover, this identification
shows that there is a correspondence between the norms $\|A\otimes
\Id|\tau\rangle\|=\|A\|_f$. More generally, if $|\psi\rangle \in H\otimes H$ is a
\textbf{purification} of $\rho\in \mcL(H)$, then $\|A\otimes
\Id|\psi\rangle\|=\|A\|_\rho$.

For positive real functions $f,g:\R_{\geq 0}\to \R_{\geq 0}$ as $x\to 0$ we write
$f(x)=O(g(x))$, if there exists constants $C,K>0$ such that for all $x\in (0,C)$ we have
that $f(x)\leq Kg(x)$. This is read as ``$f$ is big-Oh of $g$'', and means for small $x$
the behaviour of $f$ is dominated by a constant times the function $g$.

The \textbf{unitary part} of a $d\times d$ complex matrix $A$ we mean \emph{any} unitary
$U$ satisfying the \textbf{polar decomposition} equation $A=U|A|$ for $|A|=\sqrt{A^*A}$.
Every matrix has a unitary part, the simplest construction comes from the the singular
value decomposition $A=W\Sigma V$ with $U=WV$. Moreover, when $|A|$ is invertible the unitary part of $A$ is unique
and given by $ U=A|A|^{-1}=A(A^*A)^{-1/2}.$ If $A$ is self-adjoint then $\sgn(A)$ is a
(self-adjoint) unitary which satisfies $A=\sgn(A)|A|$.


\section{Approximate representation theory}\label{sec:approx_reps}

In the first part of this section, we present the key definitions and concepts from
approximate representation theory. The second part of this section contains the proof of
the main technical lemma. All of the Hilbert spaces in this section are complex and
finite-dimensional unless stated otherwise. Moreover, we assume every finitely-presented
$*$-algebra is a $\C$-vector space.

\begin{definition}\label{def:lift}
Let $\msA=\C^*\langle S:R\rangle$ and $\msB=\C^*\langle T:U\rangle$ be finitely presented
$*$-algebras. If $\psi:\msA \to \msB$ is a $*$-homomorphism, then the \textbf{lift} of
$\psi$ is the unique $*$-homomorphism $\tilde{\psi}:\C^*\langle S\rangle \to \C^*\langle
T\rangle$ such that $\tilde{\psi}(r)=0$ for all $r\in R$. Equivalently, we say that
$\tilde{\psi}$ \textbf{descends} to the $*$-homomorphism $\psi$.
\end{definition}

Of particular interest is when $\msB$ is the $*$-algebra of linear operators on $H$ with
the usual antilinear involution. When $H$ is finite-dimensional, this is matrix algebra by
the standard identification $\mcL(H)\iso M_d(\C)$ where $H\iso \C^d$. In this case, the
$*$-homomorphisms (or \emph{representations}) $\msA\to \mcL(H)$ are in one-to-one
correspondence with the \emph{lifts} $\tilde{\psi}:\C^*\langle S\rangle\to \mcL(H)$. This
point of view is essential for motivating the definition of an approximate representation.

\begin{definition}\label{defn:eps_rep}
Let $\msA=\mbC^*\langle S:R\rangle$ be a finitely-presented $*$-algebra. If $H$ is a
Hilbert space, $\rho$ a state (i.e.~density operator) in $\mcL(H)$, a
\textbf{state-dependent $\epsilon$-representation} of $\msA$ or
\textbf{$(\epsilon,\rho)$-representation} is a $*$-homomorphism
	\begin{equation*}
		\phi:\mbC^*\langle S\rangle\to \mcL(H)\text{ such that }\|\phi(r)\|_\rho\leq
		\epsilon,\text{ for all $r\in R$.}
	\end{equation*}
\end{definition}

\begin{remark}
If the $\rho$-norm is non-degenerate then every $(0,\rho)$-representation descends to a
$*$-homomorphism $\msA\to \mcL(H)$. Moreover, if $T\subset R$ is a subset of relations and
$\phi$ is a non-degenerate $(\epsilon,\rho)$-representation of $\msA=\mbC^*\langle
S:R\rangle$ where $\|\phi(r)\|_\rho=0$ for all $r\in T$, then $\phi$ satisfies the
relations $T\subset R$. In particular, this means that $\phi$ descends to a representation
of $\mbC^*\langle S:T\rangle$.
\end{remark}

The universal notions identified in \cref{def:lift} illustrate why
$\epsilon$-representation $\phi$ of $\msA$ are formally $*$-homomorphisms of the free
$*$-algebra $\C^*\langle S\rangle$ such that ``$\phi(R)\approx_\epsilon 0$''. However,
they also indicate some degree of flexibility in the notion of an approximate
representation. In particular, there are multiple ways to quantify
``$\phi(R)\approx_\epsilon 0$''. As such, other notions of approximate representations
exist in the literature\footnote{The definition in \cref{defn:eps_rep} can
be viewed as a \emph{worst case} notion of approximate representation. This is contrasted
with an \emph{average case} notion, where $\epsilon$ represents the average error over all
the relations according to a measure on $R$, see for instance \cite{CVY23}.}, see for
example \cite{GH17,Thom18}. Returning to \cref{defn:eps_rep}, we highlight the important
case where the state $\rho$ is the \emph{maximally-mixed state} $\rho=\Id/d$, with
$dim(H)=d$. In this case, the semi-norm or $\rho$-norm coincides with the little Frobenius
(a.k.a.~normalized Hilbert-Schmidt) norm $\|\cdot\|_{\Id/d}=\|\cdot\|_f$. This norm has
many nice qualities and is a popular norm for studying approximate representations. We
make the following definition.

\begin{definition}
Let $H\iso \C^d$. An $(\epsilon,\rho)$-representation of $\msA=\mbC^*\langle S:R\rangle$
where $\rho=\Id/d$ is called a \textbf{state-independent} $\epsilon$-representation or
simply an \textbf{$\epsilon$-representation}. In this case, it is clear that the relations
hold approximately in the little Frobenius norm $\|\cdot\|_f$ on $\mcL(H)$.
\end{definition}

A keen reader may be aware that $*$-representations of $\C^*\langle X\rangle$ on arbitrary
(possibly infinite-dimensional) Hilbert spaces are not bounded. This means that there is
no universal bound on the operator norm of an element in an
$(\epsilon,\rho)$-representation on an arbitrary Hilbert space. That being said, in any
given $(\epsilon,\rho)$-representation every $\|\phi(x)\|_{op}$ is finite when $H$ is
finite-dimensional. Indeed, since $X$ is a finite set for any given
$(\epsilon,\rho)$-representation $\max\{\|\phi(x)\|_{op}:x\in X\}$ bounds the norm of every
$\phi(x)$. Hence, by letting $\kappa_\phi>0$ be the largest singular value among
generators of an $(\epsilon,\rho)$-representation we can explicitly track how results
about $(\epsilon,\rho)$-representations depend on this quantity. On the other hand, we
recall that if a finitely-presented $*$-algebra $\msA=\mbC^*\langle S:R\rangle$ is
\emph{archimedean} then in every $*$-representation $\psi:\mbC^*\langle S\rangle\to
\mcL(H)$ of $\msA$, the largest singular value of each $\psi(s)$ is bounded by the radius
$\vartheta_\msA>0$ of $\msA$, which depend on the presentation of $\msA$ and not on $H$.
Hence, it is not unreasonable to expect that approximate representations have the same or
similar bounded property, especially since as $\epsilon\rightarrow 0$ we expect for
$\kappa_\phi$ to coincide with $\vartheta_\msA$. To keep things simple, we can also
restrict the domain of our approximate representations when $\msA$ is archimedean.

\begin{definition}\label{def:arch_rep}
Let $\msA=\mbC^*\langle S:R\rangle$ be an archimedean finitely-presented $*$-algebra, and
let $\vartheta_\msA>0$ be the bounded radius. An $(\epsilon,\rho)$-representation
$\phi:\mbC^*\langle S\rangle\to \mcL(H)$ is a \textbf{bounded approximate representation}
if each $\phi(s)$ has singular value at most $\vartheta_\msA$, for all $s\in S$.
\end{definition}

Although the results of this section are for general approximate representations, we note
that by restricting to the class of \emph{bounded approximate representations} some result
can be strengthened, in particular when the bounded radius of $\msA$ is $1$ and we
take our approximate representations to be bounded as well.

\subsection{Stability and replacement}

One of the central questions in approximate representation theory is when are approximate
representation \emph{close} to exact representations? The answer to this question is
captured by the notion of stability for finitely presented $*$-algebras. Intuitively, the
more stable the algebra the more closely approximate representations correspond to genuine
representations.

\begin{definition}\label{def:mat_stab}
Let $g:\mbR_{\geq0} \arr \mbR_{\geq0}$ be a non-negative function. A finitely presented
$*$-algebra $\msA=\mbC^*\langle S:R\rangle$ is \textbf{$(g,C)$-stable} if for every
non-degenerate $(\epsilon,\rho)$-representation of $\msA$ given by $\phi:\C^*\langle S\rangle \arr
\mcL(H)$ with $\epsilon\leq C$, there is a $*$-homomorphism $\psi:\C^*\langle S\rangle\arr
\mcL(H)$ of $\msA$ such that
	\begin{equation*}
		\|\phi(s)-\psi(s)\|_\rho\leq g(\epsilon),
	\end{equation*}
for all $s\in S$. Alternatively, we say that $\msA$ is $g$-\textbf{stable}
if it is $(g,C)$-stable for all $\epsilon\geq 0$, and \textbf{stable} if
$g(\epsilon)=O(\epsilon)$.
\end{definition}

The stability function $g:\mbR_{\geq0} \arr \mbR_{\geq0}$ describes the behaviour of how
exact representations relate to approximate representations. The asymptotics of $g$ gives
us an idea of how much we need to perturb or shift $\phi$ to obtain a genuine
representation. We make two remarks: firstly, the notion of stability should be Hilbert
space free in the sense that it should not depend on the dimension of $H$. Secondly,
although the stability of a finitely presented algebra is sensitive to the choice of
presentation, the following result shows that for state-independent approximate
representations changing the presentation will not affect the stability asymptotically.
Despite our earlier emphasis on state-dependent approximate representations, several facts
about stability are significantly harder to establish in this regime because the
$\rho$-norm generally fails to have the bimodule property with respect to the operator
norm. So we proceed in the state-independent case and mention when a result holds for the
state-dependent case.

\begin{proposition}\label{prop:stable_pres}
Let $\msA=\C^*\langle S:R\rangle$ and $\msB=\C^*\langle T:U\rangle$ be finitely presented
$*$-algebras and $H$ a Hilbert space. If $\phi: \C^*\langle T\rangle\to \mcL(H)$ is an
$\epsilon$-representation and $\psi:\msA\to \msB$ a $*$-homomorphism, then there exists a
constant $C>0$ so that $\phi\circ \tilde{\psi}$ is an $C\epsilon$-representation of
$\msA$, where $\tilde{\psi}$ is the lift of $\psi$.
\end{proposition}

\begin{proof}
Let $\varphi:\C^*\langle S\rangle \to \msA$ and $\eta:\C^*\langle T\rangle \to \msB$ be
the quotient maps induced by the (two-sided) $*$-ideals $\langle \langle R\rangle\rangle$
and $\langle \langle U\rangle \rangle$ respectively. Furthermore, let $\tilde{\psi}$ be
the lift of the $*$-homomorphism $\psi:\msA\to \msB$. Since $\varphi\circ
\psi(r)=\eta\circ \tilde{\psi}(r)=0$ for all $r\in R$, we conclude that
$\tilde{\psi}(r)\in \langle \langle U\rangle \rangle$ for all $r\in R$. Consider a single
$r\in R$, and note that $\tilde{\psi(r)}\in \langle\langle U\rangle \rangle$ if and only
if there exists a collection $i\in I\subseteq U$, coefficients $\gamma_i\in \C$, monomials
$\{w_i, v_i\} \in \C^*\langle T\rangle$, and relations $\{u_i\} \in U$, so that
$\tilde{\psi}(r)=\sum_{i\in I} \gamma_i w_i u_i v_i$. Then, if $\phi:\C^*\langle
T\rangle\to \mcL(H)$ is a $\epsilon$-representation of $\msB$, we see that each $r$ is
bounded by
\begin{align*}
\|\phi\circ\tilde{\psi}(r)\|_f
\leq \sum_{i\in I} |\gamma_i|\|\phi(w_i)\|_{op}\|\phi(u_i)\|_f \|\phi(v_i)\|_{op}
\leq \sum_{i\in I} C_0\|\phi(u_i)\|_f
\leq |I|C_0\epsilon,
\end{align*}
where $C_0$ is a constant depending on the presentation which bound the coefficients
$\gamma_i$, and the operator norms of the monomials $\{v_i\}_{i\in I}$ and $\{u_i\}_{i\in
I}$ respectively. That is, we let
$C_0=\max_i\{|\gamma_i||\phi(w_i)\|_{op}\|\phi(v_i)\|_{op}\}$. Furthermore, if
$\kappa_\phi=\max_{t\in T}\{\|\phi(t)\|_{op}\}$, then $\|\phi(w_i)\|_{op}$ (resp.
$\|\phi(v_i)\|_{op}$) are bounded by $\kappa_\phi^l$ where $l=\max_i\{len(w_i),len(v_i)\}$
is the longest monomial in $\{v_i,w_i:i\in I\}$, which is finite but not given explicitly
from the presentation. Now, we define the constant $C_r=C_0|I|$, and the result follows by
taking the largest $C_r$ among the relations in $R$, that is $C=\max_{r\in R}\{C_r\}$.
\end{proof}

In particular, if $\C^*\langle S:R\rangle$ and $\C^* \langle T:W\rangle$ are both
presentations of a finitely presented $*$-algebra $\msA$ then $\C\langle S:R\rangle
\iso_\psi \C^*\langle T:W\rangle$ and if $\C^*\langle S:R\rangle$ is stable with
$g(\epsilon)=C\epsilon$, then there exists a constant $C'>0$ such that $\C^* \langle
T:W\rangle$ is stable with $g(\epsilon)=C'\epsilon$ and vice versa. We do note that the
constant in \cref{prop:stable_pres} does depend on $\kappa_\phi$ the operator norm of the
approximate representation, and so one should be cautious in applying this result in a
case where $\phi$ is not bounded and or $\msA$ is not archimedean. Fortunately, for our
applications this is not an issue.

Up until now the discussion of stability has been quite abstract. In reality stability for
matrices is quite a concrete notion. For example, the $*$-algebra of self-adjoint matrices
\begin{align*}
\C^*\langle X_1,\cdots,X_n: X_i^*-X_i\text{, for all $1\leq i \leq n$}\rangle
\end{align*}
is stable with $g(\epsilon)=\epsilon/2$. We can see this by remarking that for any $X_i$
with $\|X_i-X_i^*\|_f\leq \epsilon$, setting $Y_i=\frac{1}{2}\left(X_i^*+X_i\right)$ the
following two conditions holds: (i) each $Y_i$ is self-adjoint, and (ii) each $Y_i$ is
close to $X_i$, since $ \|X_i-Y_i\|_f\leq \frac{1}{2}\|X_i-X_i^*\|_f\leq \epsilon/2$. We
can repeat the construction for all $1\leq i \leq n$ to get a collection of self-adjoints
$Y_1,\ldots, Y_m$, and we see the self-adjoint relations are $\epsilon/2$-stable.

This description of stability in terms of stable relations is intentional, and it
motivates the following case. Suppose the self-adjoint relations are a subset of the
defining relations of some finitely presented $*$-algebra. Using the fact that the algebra
is stable, we could replace an approximate representation on $\C^*\langle X\rangle$ with the
one that satisfies the self-adjoint relations. In this case, we would like to know the
extent to which this affects the remaining relations. The next result called the
replacement lemma, gives an upper bound on the quality of the approximate representation
obtained in this way. Specifically, by replacing the approximate representation with another approximate representation, whose distance on the generators is known.

\begin{lemma}\label{lem:replacement}
Let $\msA=\C^*\langle S:R\rangle$ and let $R'\subset R$ be a subset of the relations so
that $\msA'=\C^*\langle S:R'\rangle$ is archimedean. There exists constants $K>0$, such
that if $\phi:\C^*\langle S\rangle \to \mcL(H)$ is an $\epsilon$-representation of $\msA$
on a finite-dimensional Hilbert space $H$ with and $\psi:\mbC^*\langle S\rangle \to
\mcL(H)$ is a representation of the quotient $\C^*\langle S:R'\rangle$ with
\begin{equation*}
	\|\phi(s)-\psi(s)\|_f\leq \delta,
\end{equation*}
for all $s\in S$, then $\psi$ is a $(K\delta+\epsilon)$-representation of $\msA$.
\end{lemma}

\begin{proof}
Our proof proceeds in two steps. First, we claim that there exists a constant $M_L>0$ such
that for any monomial $\alpha\in \C^*\langle S\rangle$ of length $L$ we have that
$\|\psi(\alpha)-\phi(\alpha)\|_f\leq M_L\delta$. To begin, we observe that
\begin{equation}\label{eq:mon_len_bd}
	\|\phi(\alpha)-\psi(\alpha)\|_f =\|\phi(s_{i_1})\ldots\phi(s_{i_L})-\psi(s_{i_1})\cdots
	\psi(s_{i_L})\|_f\leq \sum_{k=0}^{L-1} C_0^kC_1^{L-(k+1)}\delta.
\end{equation}
where $C_1=\max_{s\in S}\{\|\phi(s)\|_{op}\}=\kappa_\phi$ and $C_0=\max_{s\in
S}\{\|\psi(s)\|_{op}\}=\vartheta_{\msA'}$ are the largest singular values amongst all the
generators in the image of $\phi$ and $\psi$ respectively. Noting that
\begin{equation}\label{eq:sym_geo}
\sum_{k=0}^{L-1} C_0^kC_1^{L-(k+1)}=\begin{cases}
& \frac{C_1^L-C_0^L}{C_1-C_0}\text{ when $C_1\neq C_0$, and}\\
& LC_0^{L-1}\text{ when $C_0=C_1$},
\end{cases}
\end{equation}
we can take $C=\max\{C_0,C_1\}$ and conclude that the \cref{eq:sym_geo} is bounded by
$M_L:=LC^{L-1}$.
Since each relation $r\in R$ is a finite sum of monomials in the generators $s\in S$ with
complex coefficients, each $r\in R$ can be written as a sum over monomials $\alpha$ of
increasing lengths \begin{equation*} r=\sum_{\alpha} c_\alpha \alpha=\sum_{\ell=0}^N
\left(\sum_{\alpha:len(\alpha)=\ell}c_\alpha \alpha\right).
\end{equation*}
Therefore if $\phi$ is an $\epsilon$-representation, for any $r\in R$, we see that 
\begin{align*}
\|\psi(r)\|_f &\leq \|\psi(r)-\phi(r)\|_f+\epsilon\\
&\leq \sum_{\ell=0}^N \left(\sum_{\alpha:len(\alpha)=\ell}|c_\alpha|\|\psi(\alpha)-\phi(\alpha)\|_f\right)+\epsilon\\
&\leq \sum_{\ell=0}^N \left(\sum_{\alpha:len(\alpha)=\ell}|c_\alpha|\right)M_\ell\delta+\epsilon\\ 
&\leq \sum_{\ell=0}^N \max_\alpha\{|c_\alpha|:len(\alpha)=\ell\} |S|^\ell M_\ell\delta+\epsilon
\end{align*}
the result follows by setting $K=N\max_\ell\{c_\ell |S|^\ell M_\ell \}$, where
$c_\ell=\max_\alpha\{|c_\alpha|:len(\alpha)=\ell\}$. Since $r$ was chosen arbitrarily the result follows.
\end{proof}

If we perform replacement on the stable relations we have that $\delta=O(\epsilon)$ and we
obtain the following important corollary.

\begin{corollary}
If a finitely presented algebra $\C^*\langle S:R\rangle$ contains a subset of relations
$W\subset R$ for which the quotient algebra $\C^*\langle S:W\rangle$ is stable and
archimedean, then \emph{replacing} the $\epsilon$-representation $\phi:\C^*\langle
S\rangle\to \mcL(H)$ by an approximate representation $\psi:\C^*\langle S\rangle \to
\mcL(H)$ that descends to a $*$-homomorphism $\widetilde{\psi}:\C^*\langle S:W\rangle \to
\mcL(H)$ will be an $O(\epsilon)$-representation on $\C^*\langle S:R\rangle$. In
particular, on the relations $r\in R\setminus W$.
\end{corollary}

\begin{remark}\label{rem:bdd_replcmt}
In the proof of \cref{lem:replacement} if the approximate representation $\phi$ is bounded
then we have $C_1=C_0=\vartheta_\msA$. In this case, $L\vartheta_\msA^{L-1}\delta$ bounds
the term in Equation \eqref{eq:mon_len_bd} for each $L>1$ and depends only on the
presentation of $\msA'$. Moreover, unlike \cref{prop:stable_pres}, the bound in
\cref{lem:replacement} is explicit due to the fact that the $*$-polynomials $r\in R$ are
explicitly listed in the presentation so the lengths of the monomials are known
explicitly.
\end{remark}

Another consequence of the replacement lemma is in the case where the operator norms of
the generators in the approximate representations are at most one.

\begin{proposition}\label{prop:bounded_one}
If a finitely presented algebra $\msA=\C^*\langle S:R\rangle$ is archimedean with
$\vartheta_\msA=1$ and $R$ contains a subset of relations $W$ for which the algebra
$\C^*\langle S:W\rangle$ is stable and archimedean, then \emph{replacing} the bounded
$\epsilon$-representation $\phi:\C^*\langle S\rangle\to \mcL(H)$ by an approximate
representation $\psi:\C^*\langle S\rangle \to \mcL(H)$ that descends to a $*$-homomorphism
$\widetilde{\psi}:\C^*\langle S:W\rangle \to \mcL(H)$, will be an
$O(\epsilon)$-representation on $\C^*\langle S:R\rangle$ and the constant depends only on
the presentation of $\msA$.
\end{proposition}

\begin{proof}
The proof is identical to the proof of \cref{lem:replacement}. In particular, since
$\vartheta_\msA\leq 1$ we see that \cref{eq:mon_len_bd} is bounded by $L\delta$ (see
\cref{rem:bdd_replcmt}). Following through the rest of the proof, we see that the
resulting constant $K$ depends only on the presentation of $\msA$.
\end{proof}

Another important class of $*$-algebras/relations we consider in this work are the
\emph{unitary relations} $\{s^*s-1,s^*s-1 :\text{ for all }s\in S\}\subset R$. Suppose
that $\mbC^*\langle S:R\rangle$ is a finitely presented $*$-algebra and $R$ contains the
unitary relations. If $\phi$ is an $\epsilon$-representation of $\C^*\langle S:R\rangle$
and $\phi$ satisfies the unitary relations, then we call $\phi$ a \textbf{unitary
$\epsilon$-representation}. Remark that every unitary approximate representation is
bounded and in particular $\kappa_\phi\leq 1$.

\begin{proposition}\label{prop:unitary}
For any $d\times d$ matrix $X$ with $\|X\|_{op}\leq 1$, there is a unitary $U$ such that
\begin{equation*} \|U-X\|_f\leq \|X^*X-\Id\|_f. \end{equation*}
\end{proposition}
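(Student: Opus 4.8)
The plan is to invoke the polar decomposition of $X$ and reduce the statement to an eigenvalue inequality. First I would write $X = WP$, where $P := (X^*X)^{1/2}$ is the positive semidefinite square root and $W$ is a unitary. Such a decomposition exists for every square $X$: from a singular value decomposition $X = U\Sigma V^*$ one sets $P := V\Sigma V^*$ and $W := UV^*$, so that $WP = U\Sigma V^* = X$ and $W \in \mcU(\C^d)$, and indeed $X^*X = PW^*WP = P^2$. This construction is worth spelling out because when $X$ is singular the ``unitary part'' of the polar decomposition is only canonically a partial isometry; passing through the SVD is the clean way to get an honest unitary.

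Next, using the unitary invariance of $\|\cdot\|_f$ (fact (v) of the list above), I would compute
\[
\|X - W\|_f \;=\; \|WP - W\|_f \;=\; \|W(P - \Id)\|_f \;=\; \|P - \Id\|_f .
\]
Since $P^2 = X^*X$, it therefore suffices to prove $\|P - \Id\|_f \le \|P^2 - \Id\|_f$.

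Now I would pass to the spectrum of $P$. From $\|X\|_{op}\le 1$ we get $\|P\|_{op}^2 = \|P^2\|_{op} = \|X^*X\|_{op} = \|X\|_{op}^2 \le 1$, so every eigenvalue $\lambda$ of $P$ lies in $[0,1]$. For such $\lambda$ one has $0 \le 1-\lambda \le 1-\lambda^2$ (because $\lambda \le 1$ forces $\lambda^2 \le \lambda$), hence $(1-\lambda)^2 \le (1-\lambda^2)^2$. The matrices $P - \Id$ and $P^2 - \Id$ are Hermitian and simultaneously diagonalizable, so writing $\lambda_1,\dots,\lambda_d$ for the eigenvalues of $P$,
\[
\|P - \Id\|_f^2 \;=\; \frac{1}{d}\sum_{i=1}^{d}(1-\lambda_i)^2 \;\le\; \frac{1}{d}\sum_{i=1}^{d}(1-\lambda_i^2)^2 \;=\; \|P^2 - \Id\|_f^2 ,
\]
which finishes the argument. (One could instead phrase the last step via fact (viii) applied to $0 \le (P-\Id)^2 \le (P^2-\Id)^2$, but the direct eigenvalue computation is the most transparent.)

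There is no real obstacle here: the proposition is an elementary scalar inequality ($(1-\lambda)^2 \le (1-\lambda^2)^2$ on $[0,1]$) transported through the polar decomposition and the unitary invariance of $\|\cdot\|_f$. The only point demanding a moment of care is guaranteeing that $W$ can be taken genuinely unitary (not merely a partial isometry) when $X$ is not invertible, which the singular value decomposition settles.
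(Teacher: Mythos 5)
Your proof is correct and is essentially the same argument as the paper's: both extract a unitary $W=UV^*$ from the singular value decomposition, use unitary invariance of $\|\cdot\|_f$ to reduce to a comparison of diagonal (or positive) matrices, and conclude via the scalar inequality $1-\sigma \le 1-\sigma^2$ for singular values $\sigma\in[0,1]$. Your reorganization through the polar factor $P=(X^*X)^{1/2}$ rather than working directly with $\Sigma$ is purely cosmetic, since the eigenvalues of $P$ are exactly the singular values of $X$.
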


\begin{proof}
Consider the singular value decomposition $X=W\Sigma V$ so that $\Sigma$ is a diagonal
matrix with non-negative singular values $\sigma_j \in [0,1]$ for all $1\leq j\leq d$. Let
$U=WV$ and observe that
\begin{align*}
&\|WV-X\|_f=\|W(\Id-\Sigma)V\|_f=\||\Sigma-\Id|\|_f\\ \leq &
\||(\Sigma-\Id)(\Sigma+\Id)|\|_f=\|\Sigma^2-\Id\|_f=\|V^*(\Sigma U^*U\Sigma-\Id)V\|_f\\
=&\|(V^*\Sigma^* U^*)U\Sigma V-V^*V\|_f=\|X^*X-\Id\|_f,
\end{align*}
since $1-\sigma_j^2=(1-\sigma_j)(1+\sigma_j)\geq 1-\sigma_j$ for all $\sigma_j\in [0,1]$.
\end{proof}

On their own, we see that the unitary relations are $\epsilon$-stable provided the initial
approximate unitary has singular values at most 1 (i.e. it is bounded). If the
largest singular value of $X$ is greater than 1, one could naively take the normalization
$\hat{X}$ with respect to $\|\cdot\|_{op}$, so that $\hat{X}$ has singular value 1.
However, in this case the distance $\|X-W\|_f$ to $W$ will depend on $\|X\|_{op}$, where
$W$ is the unitary in the singular value decomposition of $\hat{X}$. In the worst case,
without any prior bound on $\|X\|_{op}$, we have that $\|X\|_{op}\leq \sqrt{d}\|X\|_f$,
but the resulting stability would not be Hilbert space free (as it would depend on $d$)
even if $\|X\|_f$ is explicitly bounded.

\subsection{Stability for self-adjoint unitaries and PVM algebras}

For our applications to nonlocal games in \cref{sec:nonlocal_games}, we focus our
attention on two important finitely presented algebras: the algebra of self-adjoint
untaries, and the algebra of projective measurement operators (PVMs). Collections of
self-adjoint unitaries arise in the context of quantum strategies for boolean constraint
system nonlocal games, where the measurement operators can be taken without loss of
generality to be boolean ($\pm1$-valued) observables. With this in mind, we define the
finitely presented $*$-algebra of self-adjoint unitaries, denoted by
\begin{equation}\label{eq:sau_alg}
\mcU_n=\C^*\langle x_1,\ldots,x_n: x_i^2-1, x_i^*x_i-1,x_ix_i^*-1,{x_i^*}^2-1\text{ for
all $1\leq i \leq n$}\rangle.
\end{equation}

We remark that $\mcU_n$ is archimedean with radius $\vartheta_{\mcU_n}=1$.

\begin{lemma}\label{lem:sa_un_stab} (\cite{MSZ23}[Lemma 3.10])
If $A$ is a $d\times d$ matrix which satisfies (i) $\|A^2-\Id\|_f\leq\epsilon$, (ii)
$\|A^*A-\Id\|_f\leq \epsilon$, (iii) $\|AA^*-\Id\|_\rho\leq \epsilon$, (iv)
$\|{A^*}^2-\Id\|_f\leq\epsilon$, and (v) $\|A^*-A\|_f\leq \epsilon$, then there exists an
self-adjoint unitary $\tilde{A}$ such that $\|\tilde{A}-A\|_f\leq 2\epsilon$.
\end{lemma}

The idea in the proof is to pick the unitary $\widetilde{A}=\sgn(\frac{A^*+A}{2})$. The result in \cite{MSZ23} show that the above holds in the state-dependent case as
well. We refer the reader to the proof in \cite{MSZ23}. \cref{lem:sa_un_stab} gives the
following immediate result.

\begin{corollary}\label{cor:sau_stab}
The $*$-algebra of self-adjoint unitaries $\C^*\langle
x_1,\ldots,x_n:x_i^2-1,x_i^*x_i-1,x_ix_i^*-1,{x_i^*}^2-1, x_i-x_i^* \text{ for all }1\leq
i\leq n\rangle$ is stable with respect to matrices and $\|\cdot\|_f$.
\end{corollary}

We note that the stability of $\mcU_n$ is Hilbert space free in the sense that there is no
dependence on $d$, additionally it does not depend on $\kappa_\phi$! Another stability
result we require concerns the stability of the group algebra $\C\Z_2^n$, which is
equivalent to the $*$-algebra of self-adjoint unitaries $\mcU_n$ modulo the $*$-ideal
generated by the commutators $[x_i,x_j]=x_ix_j-x_jx_i$ for all $1\leq i \neq j\leq n$.

\begin{lemma}(\cite{Slof19b}[Lemma 24])\label{lem:z2_stab}
There exists a constant $C>0$, such that if $\phi$ is an $\epsilon$-representation of the
group algebra $\C\Z_2^n$ in $M_d(\C)$ then there is a representation $\psi$ of $\C\Z_2^n$
in $M_d(\C)$ such that $\|\psi(s_i)-\phi(s_i)\|_f\leq C\epsilon$, for all $1\leq i \leq
n$. In particular, $\C\Z_2^n$ is stable with respect to $M_d(\C)$ and $\|\cdot\|_f$.
\end{lemma}

We refer the reader to the proof in \cite{Slof19b}. We remark that although they consider
unitary approximate representations (which are bounded), by our result \cref{cor:sau_stab}
stability in that case is sufficient. Since by \cref{lem:replacement} we can first obtain
an $O(\epsilon)$-representation that is self-adjoint and unitary. However, we note that
the constant $C$ could in this case depend on $n$ and $\kappa_\phi$, however, since we
treat $n$ as a fixed parameter it does not affect the stability asymptotically.
Furthermore, if we assume $\phi$ is a bounded approximate representation, then the quality
of the resulting approximate representation depends only on the presentation $\msA$ by
\cref{prop:bounded_one}.

The other important $*$-algebra comes from projective quantum measurements.

\begin{definition}\label{def:pvm_alg}
The PVM algebra $\mcA_{PVM}^{(\mcI,\mcO)}$ is the $*$-algebra:
\begin{equation*}
\C^*\langle \{p_a^i\}_{a\in \mcO,i\in \mcI}: {p_a^i}^2-{p_a}^i, p_a^i{p_a^i}^*-{p_a^i},
{p_a^i}^*p_a^i -{p_a^i}, {{p_a^i}^*}^2-p_a^i ,{p_a^i}^*-p_a^i \text{ for all $a\in \mcO$
and $i\in \mcI$}\rangle
\end{equation*}
Satisfying the additional relations:
\begin{enumerate}[(i)]
\item $p_a^ip_b^i$ for all $a\neq b \in \mcO$ (mutual orthogonality), and
\item $1-\sum_{a\in \mcO}p_a^i$ for each $i\in\mcI$ (completeness).
\end{enumerate}
\end{definition}

Like the algebra of self-adjoint unitaries, this algebra is also archimedean with
$\vartheta_{\mcA_{PVM}^{(\mcI,\mcO)}}=1$. This follows from noting it is a quotient of the
$*$-algebra of positive contractions. We claim that this $*$-algebra is stable with
respect to matrices and $\|\cdot\|_f$. We first collect some results, which are almost
certainly known to experts.

\begin{lemma}\label{lem:proj_stab}
If $A$ is a $d\times d$ matrix which satisfies (i) $\|A^2-A\|_f\leq\epsilon$, (ii)
$\|A^*A-A\|_f\leq \epsilon$, (iii) $\|AA^*-A\|_f\leq \epsilon$, (iv)
$\|{A^*}^2-A\|_f\leq\epsilon$, then there exists an orthogonal projection $\tilde{A}$ such
that $\|\tilde{A}-A\|_f\leq 2(\sqrt{2}+1)\epsilon$.
\end{lemma}

Before we prove \cref{lem:proj_stab} we establish several intermediate claims.

\begin{proposition}\label{prop:con_to_proj}(\cite{KPS18}[Lemma 3.4])
If $C$ is $d\times d$ positive contraction, then there exists a matrix $P$, such that
$P^2=P$ and $P^*=P$, and moreover $\|C-P\|_f\leq 2\sqrt{2}\|C^2-C\|_f$.
\end{proposition}

For a positive contraction $C$, we call the orthogonal projection $P$ in
\cref{prop:con_to_proj} the projective part of $C$ and denote it going forward as
$C_{\{0,1\}}$.

\begin{proposition}\label{prop:pos_to_con}
If $B$ is a $d\times d$ positive (semidefinite) matrix then there exists a positive
contraction $D$ with the property that $\|B-D\|_f\leq \|B^2-B\|_f$.
\end{proposition}

\begin{proof}
Let $\{\lambda_1,\ldots, \lambda_d\}$ be the eigenvalues of $B$, and let $V\subseteq \C^d$
be the image of the joint spectral projections $\{\Pi_{\lambda_i}: \lambda_i\in [0,1]\}$
of $B$ whose corresponding eigenvalues $\lambda_i$ are contained in the interval $[0,1]$
for $1\leq i \leq d$. We define $D$ as the operator which, when restricted to $V$ is equal
to $B$ (i.e. $D|_V=B|_V$). The space orthogonal to $V$ is the image of all spectral
projections $\{\Pi_{\lambda_i}: \lambda_i>1\}$ of $B$ for which the corresponding
eigenvalues are strictly greater than $1$. On $V^\perp$, we define $D$ to be equal to this
projection with eigenvalue $1$ (i.e.~$D|_{V^\perp}$ is the identity matrix). By
construction $D$ a positive contraction. Moreover, the operator $B-D$ has eigenvalues
\begin{equation*}
\mu_i=\begin{cases} 0, \text{ if } 0\leq \lambda_i\leq 1\\ \lambda_i-1, \text{ if } \lambda_i>1,
\end{cases}
\end{equation*}
for $1\leq i \leq d$. Now, if $\lambda_i\in \R$ satisfies $\lambda_i>1$ then we observe that 
\begin{equation*}
\lambda_i^2-\lambda_i= \lambda_i(\lambda_i-1)>(\lambda_i-1).
\end{equation*}
On the other hand, the operator $B^2-B$ has spectrum consisting of the eigenvalues
$\lambda_i^2-\lambda_i$ for $1\leq i \leq d$. The result follows from the calculation
\begin{align*}
\|B-D\|^2_f=\frac{1}{d}\sum_{i=1}^d \mu_i^2 =\frac{1}{d}\sum_{\lambda_i>1}(\lambda_i-1)^2
\leq
\frac{1}{d}\left(\sum_{\lambda_i>1}(\lambda_i^2-\lambda_i)^2+\sum_{\lambda_i\leq1}(\lambda_i^2-
\lambda_i)^2\right)=\|B^2-B\|_f^2,
\end{align*}
where there is equality if $\lambda_i\in \{0,1\}$ for all $1\leq i \leq d$.
\end{proof}

We call the matrix $D$ in \cref{prop:pos_to_con} the \emph{contractive part} of the
positive matrix $B$, and denote it by $B_{[0,1]}$ going forward. We are now ready to prove
\cref{lem:proj_stab}

\begin{proof}[Proof of \cref{lem:proj_stab}]
To begin, it is clear that the matrix $A_{+}=\left(\frac{A^*+A}{2}\right)^2$ is positive
(semi-definite). Furthermore, we observe that $A_+$ is close to $A$ since
\begin{equation}\label{eq:part1}
\|A-\left(\frac{A^*+A}{2}\right)^2\|_f\leq \frac{1}{4}\left(
\|A-A^2\|_f+\|A-A^*A\|_f+\|A-AA^*\|_f+\|A-{A^*}^2\|_f \right)\leq \epsilon,
\end{equation}
by using properties (i)-(iv).
Next, we consider the contractive part of $A_+$, which is defined in
\cref{prop:pos_to_con}, and we denote by $A_{[0,1]}$. If we let
$\{\lambda_1,\dots,\lambda_d\}$ be the eigenvalues of $A_+$, then we observe that
\begin{equation}\label{eq:part2}
\|A_{[0,1]}^2-A_{[0,1]}\|_f^2=\frac{1}{d}\sum_{\lambda_i\in [0,1]}
(\lambda_i^2-\lambda_i)^2 \leq \frac{1}{d}\left(\sum_{\lambda_i\in [0,1]}
(\lambda_i^2-\lambda_i)^2 +\sum_{\lambda_i>0}
(\lambda_i^2-\lambda_i)^2\right)=\|A_+^2-A_+\|_f^2.
\end{equation}
Next, we see that
\begin{equation}\label{eq:part3}
\|A_+^2-A_+\|_f\leq
\frac{1}{4}\left(\|{A^*}^2-A^*\|_f+\|A^*A-A^*\|_f+\|AA^*-A\|_f+\|A^2-A\|_f\right)\leq
\epsilon,
\end{equation}
again using (i)-(iv) and the fact that $\|{A^*}^2-A^*\|_f=\|A^2-A\|_f$ and
$\|A^*A-A^*\|_f=\|AA^*-A\|_f$. For the final step of the proof we let $\tilde{A}$ be the
projective part $A_{\{0,1\}}$ of the positive contraction $A_{[0,1]}$, then by the
triangle inequality along with \cref{prop:con_to_proj}, \cref{prop:pos_to_con}, and
equations \cref{eq:part1}, \cref{eq:part2}, and \cref{eq:part3} we see that
\begin{align*}
\|\tilde{A}-A\|_f&\leq \|A_{\{0,1\}}-A_{[0,1]}\|_f+\|A_{[0,1]}-A_+\|_f+\|A_+-A\|_f\\
&\leq 2\sqrt{2}\|A^2_{[0,1]}-A_{[0,1]}\|_f +\|A_+^2-A_+\|_f+\|A_+-A\|_f\\
&\leq 2\sqrt{2}\epsilon+2\epsilon\\
&=2(\sqrt{2}+1)\epsilon,
\end{align*}
as desired.
\end{proof}

\begin{corollary}
The $*$-algebra of orthogonal projections $\C^*\langle
p_1,\ldots,p_m:p_i^2-p_i,p_i^*p_i-p_i,p_ip_i^*-p_i,{p_i^*}^2-p_i, \text{ for all }1\leq
i\leq m\rangle$ is stable with respect to matrices and $\|\cdot\|_f$.
\end{corollary}

Again we note that this stability is Hilbert space free in that there is no dependence of
$d$ nor on the operator norms of the elements in the approximate representations.

\begin{remark}
If $s$ is the largest singular value of $d\times d$ matrix $A$, and $A$ satisfies
$\|A^2-A\|_f\leq \delta$ and $\|A-A^*\|_f\leq \delta$ for some $\delta>0$, then properties
(i), (ii), (ii), and (iv) in \cref{lem:proj_stab} all hold with $\epsilon=(4s+1)\delta$.
This suggests that the relations $p^*_ip_i-p_i$ and $p_ip_i^*-p_i$ in the orthogonal
projection algebra are fundament in obtaining a stability result that is independent of
the operator norm.
\end{remark}

The stability of the PVM algebra comes from the following lemma.

\begin{lemma}(\cite{Pad23}[Lemma 2.47] \& \cite{Har24}[Remark 2.8])\label{lem:PVM_stab}
There exists a constant $C>0$, such that if $\epsilon>0$, and $A_1,\ldots,A_n$ be positive
contractions in $M_d(\C)$ with the property that (i) $\sum_{i=1}^m\|A_i^2-A_i\|_f\leq
\epsilon$, (ii) $\sum_{1\leq i<j\leq n}\|A_iA_j\|_f\leq \epsilon$, (iii)
$\|\sum_{i=1}^mA_i-\Id\|_f\leq \epsilon$, then there exists a collection of orthogonal
projections $P_1,\ldots, P_n$ such that $P_iP_j=0$ for all $1\leq i\neq j\leq m$,
$\sum_{i=1}^mP_i=\Id$, and $\|A_i-P_i\|_f\leq C\epsilon$ for all $1\leq i \leq n$.
\end{lemma}

We refer to the proof in \cite{Har24}[Lemma 2.7]. Both proofs are based on techniques
presented in \cite{KPS18}[Lemma 3.5]. We note that the constant $C$ in \cref{lem:PVM_stab}
depends exponentially on $m$, however, in our case, we treat $m$ as a fixed parameter so
this is not an issue in this work.

\begin{corollary}\label{cor:pvm_stab}
$\mcA_{PVM}^{(\mcI,\mcO)}$ is stable with respect to $M_d(\C)$ and $\|\cdot\|_f$.
\end{corollary}

\subsection{Rounding and the approximate tracial property}

We now move on to establishing the key technical result in this work. Before we state the
result we define and review a property of state-dependent approximate representations
called the approximate tracial property.

\begin{definition}\label{defn:tracial}
Let $\rho \in \mcL(H)$ be a density matrix and $\msA=\C^*\langle S:R\rangle$ a finitely
presented $*$-algebra. An $(\epsilon,\rho)$-representation $\phi:\C^*\langle S\rangle \to
\mcL(H)$ is \textbf{$\delta$-tracial} if
\begin{equation*}
\|\phi(s)\sqrt{\rho}-\sqrt{\rho}\phi(s)\|_F\leq \delta, \end{equation*} for all $s\in S$.
\end{definition}

We make a few remarks: first, unlike our earlier properties, this property is defined in
terms of the (unnormalized) Frobenius norm $\|\cdot\|_F$. Secondly, if an
$(\epsilon,\rho)$-representation $\phi$ is $(0,\rho)$-tracial, then the linear functional
$\tr(\phi(x) \rho)$ has the tracial (cyclic) property
$\tr(\phi(x)\phi(y)\rho)=\tr(\phi(y)\phi(x)\rho)$ for any $x,y\in X$. Trace linear
functionals, or \emph{tracial states}, play an important role in the representation theory
of finite-dimensional $C^*$-algebras and so it is not surprising to have something
resembling these in the approximate case. The approximately tracial property is a
requirement to state our main rounding lemma.

\begin{definition}\label{def:sau_alg}
A finitely-presented $*$-algebra $\C^*\langle S:R\rangle$ is \textbf{generated by
self-adjoint unitaries} if $R$ contains the relations $W=\{s^2-1,s^*s-1,ss^*-1$,
${s^*}^2-1,s^*-s \text{ for all $s\in S$}\}$. Moreover, we say that $\phi:\C^*\langle
S\rangle \to \mcL(H)$ is a \textbf{self-adjoint unitary approximate representation} if
$\phi(r)=0$ for all $r\in W$.
\end{definition}

In particular, every finitely presented $*$-algebra generated by self-adjoint unitaries is
a quotient of the algebra of self-adjoint unitaries $\mcU_S$. Such algebras are
archimedean with bounded radius at most 1. These algebras are the subject of the following
rounding lemma.

\begin{lemma}\label{lem:rounding}
Let $\msG=\C^*\langle S:R\rangle$ be a finitely-presented $*$-algebra generated by
self-adjoint unitaries. If $H$ is a finite-dimensional Hilbert space, $\varphi:\C^*\langle
S\rangle \to \mcL(H)$ a self-adjoint unitary $(\epsilon,\rho)$-representation of $\msG$,
and $\varphi$ is $\epsilon$-tracial, then there exists a non-zero subspace $\widetilde{H}$
of $H$ and a state-independent $O(\epsilon^{1/2})$-representation $\phi:\C^*\langle
S\rangle \to \mcL(\widetilde{H})$. Moreover, we can choose $\phi$ to be a self-adjoint
unitary approximate representation of $\msG$ on $\widetilde{H}$.
\end{lemma}

Notably, there is no dependence on the dimension of $H$ nor of $\tilde{H}$ in any of the
resulting approximate representations. \cref{lem:rounding} can be seen as an extension of the
proof of Theorem 5.1 in \cite{SV18} from certain groups to certain $*$-algebras. Before we
prove \cref{lem:rounding}, we collect some facts and definitions that are required for the
proof. Let $\chi_I$ be the indicator function for the real interval $I\subseteq \R$. Then
for a self-adjoint operator $T\in \mcL(H)$ and measurable subset $I\subseteq \R$, the
operator $\chi_{I}(T)$ is the spectral projection onto $I\cap spec(T)$, where $spec(T)$ is
the spectrum of $T$. For $\alpha\in \R$, we let $({\geq \alpha})$ denote the interval
$[\alpha,+\infty)$ so that we can express the spectral projection onto $spec(T)\cap
[\alpha,+\infty)$ as the operator $\chi_{\geq \alpha}(T)$. The following result is a
finite-dimensional version of the ``Connes's joint distribution trick'' \cite{Con76}.

\begin{proposition}\label{prop:trick}
If $\lambda$
and $\lambda'$ are positive semi-definite operators on a finite-dimensional
Hilbert space $H$, then
\begin{equation}
\int_0^{+\infty}\|\chi_{\geq
\sqrt{\alpha}}(\lambda)-\chi_{\geq \sqrt{\alpha}}(\lambda')\|_F^2d\alpha\leq
\|\lambda-\lambda'\|_F\|\lambda+\lambda'\|_F.
\end{equation}
\end{proposition}

Rather than giving the proof we direct the reader to the concise proof in
\cite{SV18}[Lemma 5.5]. Readers wishing to see the more general case can consult the
seminal work \cite{Con76}[Lemma 1.2.6]. Along with the above technical result, we require
the following simple result.

\begin{proposition} \label{prop:integral} Let $B$ be a positive
semi-definite operator on a finite-dimensional Hilbert space
\begin{equation*} \int_0^{+\infty}\chi_{\geq
\sqrt{\alpha}}(B)d\alpha=B^2. \end{equation*}
\end{proposition}

Again, a short proof can be found in \cite{SV18}[Lemma 5.6]. In particular, if
$\lambda=\sqrt{\rho}$ for a density matrix $\rho$, then we see that
\begin{equation*}
\int_0^{+\infty}\tr\left(\chi_{\geq
\sqrt{\alpha}}(\lambda)\right)d\alpha=1.
\end{equation*}

In the preliminaries, we defined a \emph{unitary part} of a matrix. However for
\cref{lem:rounding} we need to show that a particular unitary part or a self-adjoint
matrix exists. In particular, there exists a unitary part that when restricted to a
certain subspace is a unitary matrix on that subspace).

\begin{lemma}\label{lem:unitary_res}
Let $X\in M_d(\C)$ be self-adjoint and $P\in M_d(\C)$ be an orthogonal projection.
There exists a unitary part $U$ of $PXP$ which restricts to a unitary on $PM_d(\C)P$.
Moreover, we can pick a $U$ so that $U$ restricted to the image of $P$ is equal to $WV$, where $W\Sigma V$ is the singular value decomposition of $PXP$ in $PM_d(\C)P$.
\end{lemma}

The proof can be found in \cref{sec:parts}.

\begin{proof}[Proof of \cref{lem:rounding}]
Suppose $\msG=\C^*\langle S:R\rangle$ is a finitely presented $*$-algebra and
$\varphi:\C^*\langle S\rangle\to \mcL(H)\iso M_d(\C)$; sending $s_j\mapsto X_j$, is a self-adjoint
unitary $(\epsilon,\rho)$-representation that is $\epsilon$-tracial. Let $\varphi(r)$
represent the image of the relations $r\in R$ in the self-adjoint unitary representatives
$\{X_1,\ldots, X_n\}$ each in $M_d(\C)$.
We begin by showing that there is a non-zero orthogonal projection $P$ on $H$ for which:
\begin{equation}\label{eqn:claim1} \|X_j P-PX_j\|_F= O(\epsilon^{1/2})\tr(P)^{1/2} \text{
for $1\leq j\leq n$, and }\end{equation} \begin{equation}\label{eqn:claim2}
\|\varphi(r)P\|_F= O(\epsilon^{1/2})\tr(P)^{1/2}\text{ for all $r\in R$}. \end{equation}
Both claims follow from \cref{prop:trick}. In particular, for each $1\leq j\leq n$, we see
that \begin{align*} &\int_0^{+\infty}\|X_j\chi_{\geq \sqrt{\alpha}}(\lambda)-\chi_{\geq
\sqrt{\alpha}}(\lambda)X_j\|_F^2d\alpha\\ &=\int_0^{+\infty}\|\chi_{\geq
\sqrt{\alpha}}(\lambda)-X_j^*\chi_{\geq \sqrt{\alpha}}(\lambda)X_j\|_F^2d\alpha \\ &\leq
\|\lambda-X_j^*\lambda X_j\|_F\|\lambda+X_j^*\lambda X_j\|_F\\ &=\|X_j\lambda-\lambda
X_j\|_F\|X_j\lambda+\lambda X_j\|_F\\ &\leq 2 \|X_j\lambda-\lambda X_j\|_F\\ & =
O(\epsilon), \end{align*} by the $\epsilon$-tracial property of the approximate representation
$\varphi$. Additionally, by \cref{prop:integral} we have
\begin{align*} \int_0^{+\infty}\|\varphi(r)\chi_{\geq
\sqrt{\alpha}}(\lambda)\|_F^2d\alpha=\|\varphi(r)\|_\rho^2= O(\epsilon^2),
\end{align*} for each of the relations $r\in R$. Recall that for $\epsilon\leq 1$, we have
$\epsilon^2\leq \epsilon$. Hence, we have that $O(\epsilon^2)=O(\epsilon)$. Therefore,
summing over all relations $r\in R$ we see that \begin{align}\label{eqn:avg}
&\int_0^{+\infty}\bigg(\sum_{j=1}^n \|\chi_{\geq \sqrt{\alpha}}(\lambda)-X_j^*\chi_{\geq
\sqrt{\alpha}}(\lambda)X_j\|_F^2+\sum_{r\in R}\|\varphi(r)\chi_{\geq
\sqrt{\alpha}}(\lambda)\|_F^2\bigg)d\alpha\\ &\leq
O(\epsilon)\int_0^{+\infty}\tr\left(\chi_{\geq \sqrt{\alpha}}(\lambda)\right)d\alpha,
\end{align} holds for any $\alpha\geq 0$. Moreover, each integrand is zero if
$\alpha>\|\lambda\|_{op}^2$ by the definition of $\chi_{\geq \sqrt{\alpha}}(\lambda)$. So
there exists a value $0< \alpha_0\leq \|\lambda\|_{op}^2$ such that if we set
$P:=\chi_{\geq \sqrt{\alpha_0}}(\lambda)$, then $P$ is a non-zero projection and
\begin{equation}\label{eqn:proof1} \sum_{j=1}^n \|X_jP-PX_j\|_F^2 +\sum_{r\in R}
\|\varphi(r)P\|_F^2= O(\epsilon)\tr(P). \end{equation} This in turn, bounds each summand
on the left hand side of equation \cref{eqn:proof1} by $O(\epsilon)\tr(P)$ as all the
terms are positive, establishing that both \cref{eqn:claim1} and \cref{eqn:claim2} hold.
Let $\widetilde{X}_j$ be the unitary part of $PX_jP$ that restricts to a unitary on
$Im(P)$ which we denote as the subspace $\widetilde{H}\subset H$. We note that such unitaries exists by \cref{lem:unitary_res}. We now show that
the following holds in the space $PM_d(\C)P\iso \mcL(\widetilde{H})$.
\begin{enumerate}[(a)] \item 
$\|\widetilde{X}_j-X_jP\|_F=
O({\epsilon}^{1/2})\tr(P)^{1/2}$ for all $1\leq j\leq n$, and \item if $X_{j_1}\cdots X_{j_k}$ for $1\leq
j_1,\cdots,j_k\leq n$ is a word of length $k$, then
\begin{equation}
\|X_{j_1}\cdots X_{j_k}P-\widetilde{X}_{j_1}\cdots \widetilde{X}_{j_k}\|_F=
O({\epsilon}^{1/2})\tr(P)^{1/2}
\end{equation}
 where the constant depends only on $k$, for each $1\leq j\leq n$.
\end{enumerate}
We begin by establishing  (a). From \cref{lem:unitary_res} the matrix $\widetilde{X}_j$ is unitary in $\mcL(\widetilde{H})$ and a unitary part of $PX_jP$. We claim that on this compressed matrix space,
\begin{equation}\label{eqn:proof2}
\|\widetilde{X}_j-PX_jP\|_F= O({\epsilon}^{1/2})\tr(P).
\end{equation}
For the proof, we note that $\|PX_jP\|_{op}\leq \|X_j\|_{op}\leq 1$, hence, if we write $\sigma_i$ for $1\leq i \leq k$ are the singular values of $PXP$, then $\|\widetilde{X}_j-PX_jP\|_F^2=\|W(\Id-\Sigma)V\|_F^2=\sum_{i=1}^k(1-\sigma_i)^2$ by using the singular value decomposition $PX_jP=W\Sigma V$, and the fact that we can take $X_j=WV$ in $PM_d(\C)P$.  Similarly, we rewrite $\frac{1}{2}\|X_jP-PX_j\|_F^2=\tr(P-(PX_jP)^2)=\sum_{i=1}^k(1-\sigma_i^2)$. Futhermore, we have that $0\leq \sigma_i\leq 1$, so it follows that $(1-\sigma_i)^2\leq (1-\sigma_i^2)$, for all $1\leq i \leq k$, and therefore by \cref{eqn:claim2} we see that in $PM_d(\C)P$
\begin{equation}\label{eqn:step1}
\|\widetilde{X}_j-PX_jP\|_F\leq \frac{1}{\sqrt{2}}\|X_jP-PX_j\|_F=O({\epsilon}^{1/2})\tr(P),
\end{equation}
for all $1\leq j\leq n$. Before continuing, we observe that \cref{eqn:step1} also shows that
\begin{equation*}
\|PX_jP-X_jP\|_F=\|PX_jP-X_jP^2\|_F\leq \|PX_j-X_jP\|_F\|P\|_{op}= O({\epsilon}^{1/2})\tr(P)^{1/2}.
\end{equation*}
Hence, by the triangle inequality, for each $1\leq j\leq n$, we see that
\begin{equation}\label{eqn:close}\|\widetilde{X}_j-X_jP\|_F\leq
\|\widetilde{X}_j-PX_jP\|_F+\|PX_jP-X_jP\|_F =
O({\epsilon}^{1/2})\tr(P)^{1/2},
\end{equation}
as desired.

For claim (b), we note that in $PM_d(\C)P$,
$\widetilde{X}_{j}=P\widetilde{X}_{j}$ for any $1\leq j\leq n$, and therefore
\begin{align*}
\|X_{j_1}\cdots X_{j_k}P-\widetilde{X}_{j_1}\cdots
\widetilde{X}_{j_k}\|_F
&\leq O({\epsilon}^{1/2})\tr(P)^{1/2}\\&+\|X_{j_1}\cdots
X_{j_{k-1}}P\widetilde{X}_{j_k}-\widetilde{X}_{j_1}\cdots
\widetilde{X}_{j_{k-1}}\|_F. \end{align*}
The result follows by induction on $k\in \N$.

For the next part of the prof, we show that $(PX_jP)^*(PX_jP)$ is almost the identity on $Im(P)$.
Since $(PX_jP)^*=PX_jP$, we have that
\begin{equation}\label{eqn:proof3}
\|(PX_jP)^2-P\|_F\leq \|X_jPX_j-P\|_F=O({\epsilon}^{1/2})\tr(P)^{1/2}.
\end{equation}

We now conclude the proof by showing that the map $\phi:\C^*\langle S\rangle
\to \mcL(Im(P))$ sending $s_j\mapsto \widetilde{X}_j$ is an
$O(\epsilon^{1/2})$-representation on $Im(P)\subset
H$ with respect to $\|\cdot\|_f$. By construction, we know  that $\widetilde{X}_{j}$
is a unitary on $Im(P)$. To see that each $\widetilde{X}_j$ is close to an involution in $PM_d(\C)P$, we verify
\begin{align*}
\|\widetilde{X}_{j}^2-P\|_f&=\frac{1}{\tr(P)^{1/2}}\|\widetilde{X}_j^2-P\|_F\\
&\leq\frac{1}{\tr(P)^{1/2}}\left(\|\widetilde{X}_j^2-(PX_jP)^2\|_F+\|(PX_jP)^2-P\|_F\right)\\
&\leq\|\widetilde{X}_j^2-(PX_jP)^2\|_f+\frac{1}{\tr(P)^{1/2}}\|(PX_jP)^2-P\|_F\\
&\leq\|\widetilde{X}_j^2-\widetilde{X}_j(PX_jP)\|_f+\|\widetilde{X}_j(PX_jP)-(PX_jP)^2\|_f+O(\epsilon^{1/2})\\
&\leq\|\widetilde{X}_j\|_{op}\|\widetilde{X}_j-PX_jP\|_f+\|\widetilde{X}_j-PX_jP\|_f\|PX_jP\|_{op}+O(\epsilon^{1/2})\\
& \leq \frac{2}{\tr(P)^{1/2}}\|\widetilde{X}_j-PX_jP\|_F+O(\epsilon^{1/2})\\ & \leq
O(\epsilon^{1/2}),
\end{align*}
where we have used the fact that both $\|\widetilde{X}_j-PX_jP\|_F$ and
$\|(PX_jP)^2-P\|_F$ are bounded by $O(\epsilon^{1/2})\tr(P)^{1/2}$ by \cref{eqn:step1} and \cref{eqn:proof2} respectively.

For the remaining relations, we recall the image of each relation under the
approximate representation $\varphi:\C^*\langle S\rangle\to \mcL(\tilde{H})$ is a
$*$-polynomial $\phi(r)=\sum_{M\subset [n]} \gamma_M \prod_{j\in M}\varphi(s_j)$ where
$\gamma_M\in \C$ are coefficients. By the triangle inequality, it suffices to bound each of
the monomials. However, each monomial is bounded by the proofs of claims (a) and (b) earlier.
Hence, for any relation $r\in R$ of $\msG$ we conclude that
\begin{align*}
\|\phi(r)\|_f&=\frac{1}{\tr(P)^{1/2}}\|\phi(r)\|_F\leq
\frac{1}{\tr(P)^{1/2}}\left(\|\phi(r)-\varphi(r)P\|_F+\|\varphi(r)P\|_F\right)\\
&\leq\frac{1}{\tr(P)^{1/2}} \sum_{M\subset [n]} \gamma_M \|\prod_{j\in M}
\widetilde{X}_j-\prod_{j\in M} X_jP\|_F+O(\epsilon^{1/2})\\ &\leq\frac{1}{\tr(P)^{1/2}}
\sum_{M\subset [n]} |\gamma_M| \sum_{j\in M}\|\widetilde{X}_j-X_jP\|_F+O(\epsilon^{1/2})\\
&\leq \sum_{M\subset [n]} |\gamma_M| \sum_{j\in M}O({\epsilon}^{1/2})+O(\epsilon^{1/2})\\
\end{align*}
is $O({\epsilon}^{1/2})$ completing the proof.
\end{proof}

The assumption on the size of $\epsilon$ in \cref{lem:rounding} depends on the relationship
between the quantities in \cref{eqn:avg}. In particular, for the regime of interest with
$\epsilon\rightarrow 0$, we take the larger of the two quantities. One could imagine a
version of \cref{lem:rounding} with an $(\epsilon,\rho)$-representation that is
$\delta$-tracial. The resulting approximate representation would ultimately depend on both
$\epsilon$ and $\delta$. However, in the application of strategies for nonlocal games, we
always obtain an approximate representation where both $\delta$ and $\epsilon$ are determined
by the winning probability of the game, so this technicality does not arise.

An interesting open question is whether you can remove the self-adjoint unitary
assumptions from the state-dependent approximate representation in the hypothesis of
\cref{lem:rounding}. To do this, one would need a state-dependent version of the
replacement lemma. If something like this holds, then one would expect that it could
remove this assumption since the stability of the self-adjoint unitary algebra holds in
the state-dependent case by the result of \cite{MSZ23}. We note that this does not affect
our main results because the approximate representations that come from quantum strategies
are already self-adjoint and unitary.

\section{Near-optimal quantum strategies and nonlocal game algebras}\label{sec:nonlocal_games}

A two-player nonlocal game is a scenario involving two players, Alice and Bob, and a
referee. In the game, Alice (resp.~Bob) receives questions $i\in \mcI_A$ (resp. $j\in
\mcI_B$) from the referee according to a probability distribution $\pi:\mcI_A\times
\mcI_B\to \R_{\geq 0}$. Alice (resp.~Bob) responds to each question with answers $a\in
\mcO_A$ (resp. $b\in \mcO_B$). However, once they receive their questions they are not
permitted to communicate with each other. The goal of the players is to satisfy the rule
predicate $\mcV: \mcO_A\times \mcO_B\times \mcI_A\times \mcI_B\to \{0,1\}$, a function
such that $\mcV(a,b|i,j)=0$ indicates a loss and $\mcV(a,b|i,j)=1$ a win. The description
of the game and the predicate is known to the players before the game begins. The goal of the
players is to maximize their winning probability.

Although communication between the players is not permitted once the game begins, the
players can share a bipartite quantum state. With this resource, the players can make
local quantum measurements on their subsystem to obtain their answers. This allows the
players to employ quantum correlations in their
strategy to win the game. Strategies that employ these quantum correlations are called
quantum (or entangled) strategies, while strategies that use only classical correlations
are called classical strategies.

\begin{definition}\label{def:q_strat}
A \textbf{quantum strategy} $\mcS$ for a nonlocal
game $\mcG$ consists of: \begin{enumerate}[(i)] \item finite-dimensional Hilbert space
$H_A$ and $H_B$, \item collections of orthogonal projections $\{\{P_a^i\}_{a\in
\mcO_A}:i\in \mcI_A\}$ acting on $H_A$, such that $\sum_aP_a^i=\Id_{H_A}$, for all $i\in
\mcI_A$, and a collection of orthogonal projections $\{\{Q_b^j\}_{b\in B}:j\in \mcI_B\}$
acting on $H_B$, such that $\sum_b Q_b^j=\Id_{H_B}$, for all $j\in \mcI_B$, and \item a
quantum state $|\psi\rangle \in H_A\otimes H_B$.\
\end{enumerate}
\end{definition}
These collections of projective-valued measures (PVMs) and a bipartite state correspond to a
quantum strategy $\mcS$ in the sense that they model the correlations
\begin{equation*}
p(a,b|i,j)=\langle \psi|P_{a}^i\otimes Q_{b}^j|\psi\rangle,
\end{equation*} for all outcomes $a\in \mcO_A$, $b\in \mcO_B$, and inputs $ i \in
\mcI_A$, and $j\in \mcI_B$, that can occur in the game. More generally, quantum
correlations can be modelled by positive-operator value-measures (or POVMs) and mixed
states (i.e.~ density operators). However, in this work, we restrict to the class of PVM
strategies with pure states. This is justified by the fact that Naimark's dilation theorem
tells us that any correlation achieved by a POVM on a finite-dimensional Hilbert space can
be achieved by a projective-value-measure (PVM) on a larger (but still) finite-dimensional
Hilbert space. Similarly, a standard purification argument shows that any correlation
achieved with a mixed state can be achieved with a pure state\footnote{In fact, one can
even find a pure state on the same dimensional Hilbert space as the mixed state
\cite{SVW16}.}. Players employing a quantum strategy $\mcS$ for a nonlocal game $\mcG$ win
with probability
\begin{equation}\label{c_value}
\omega(\mcG;\mcS)=\sum_{i,j\in \mcI_A\times
\mcI_B}\pi(i,j)\sum_{a,b\in A\times B}\mcV(a,b|i,j)\langle \psi|P_{a}^i\otimes
Q_{b}^j|\psi\rangle.
\end{equation}
The probability $\omega(\mcG;\mcS)$ is often referred to as the \textbf{value of $\mcG$
under $\mcS$}. The optimal winning probability, called the \textbf{quantum (or
entangled) value} of the game, is the supremum over all quantum strategies and is denoted
by $\omega^*(\mcG)=\sup_\mcS\omega(\mcG;\mcS)$.

\begin{definition}\label{def:approx_strat}
A strategy $\mcS$ for a nonlocal game $\mcG$ is \textbf{perfect} if $\omega(\mcS;\mcG)=1$.
More generally, a strategy is \textbf{optimal} if $\omega(\mcG;\mcS)=\omega^*(\mcG)$. The
notion of near-optimal and near-perfect quantum strategies are natural extensions of the
ideal case. That is, a quantum strategy $\mcS$ is \textbf{$\epsilon$-optimal} if
$|\omega(\mcG;\mcS)-\omega^*(\mcG)|\leq\epsilon$ and is
\textbf{$\epsilon$-perfect}\footnote{We note that our definition of $\epsilon$-perfect
differs from the one used in \cite{SV18}.} if $\omega(\mcG;\mcS)\geq 1-\epsilon$.
\end{definition}

For convenience, we let $p_{ij}(\mcS)$ denote the probability of winning with strategy
$\mcS$ given question pair $(i,j)\in \mcI_A\times \mcI_B$. It is not hard to see that a
quantum strategy $\mcS$ is perfect if and only if $p_{ij}(\mcS)=1$ for all $(i,j)$.
Moreover, any strategy for which $p_{ij}(\mcS)\geq 1-\epsilon$ for all questions $(i,j)$
will be $\epsilon$-perfect. However, we note that the converse is not true. Consider the
case where the distribution of questions is uniform. In this case, a strategy that is
$\epsilon$-perfect implies that $p_{ij}(\mcS)\geq 1-|\mcO_A||\mcO_B|\epsilon$, since a
strategy losing on some question with probability $|\mcO_A||\mcO_B|\epsilon$ could still
win the overall game with probability $1-\epsilon$. Nonetheless, when the questions are
asked uniformly the property of the strategy $\mcS$ being $\epsilon$-perfect and the
property that $p_{ij}(\mcS)\geq 1-\epsilon$ for all $(i,j)$ are equivalent up to a
constant that depends on $\mcG$. Hence, we restrict to the uniform case in this work and
leave the non-uniform case for future work.

\subsection{BCS nonlocal games}\label{sec:bcs_games}

We now focus our attention on the study of Boolean constraint system (BCS) nonlocal games.
BCS games have previously been called ``binary constraint'' nonlocal games, for instance
in \cite{CM14}, however, we prefer the term ``boolean constraint'' as it avoids any
confusion with the subclass of ``2-ary'' constraints. Before we define the BCS nonlocal
game, we review the formal concept of a boolean constraint system.

We adopt the multiplicative convention and associate $-1$ with the boolean $\textsf{TRUE}$
value and $1$ with the boolean $\textsf{FALSE}$. A $k$-ary \textbf{boolean relation}
$\msR$ is a subset of $\{\pm1\}^k$ for $k>0$. Given a set of boolean variables
$V=\{x_1,\ldots,x_n\}$, a \textbf{constraint} $\msC$ is a pair $(\msU,\msR)$ where the the
\textbf{context} $\msU_i$ is the subset of variables $V$ that make up the constraint
$\msC_i$, and $\msR$ is a $k$-ary boolean relation. A satisfying assignment to a
constraint $\msC$ is a function $\phi:V\to \{\pm1\}$ such that $\phi(\msU)\in \msR$. A
\textbf{boolean constraint system} $\msB$ is a pair $(V,\{\msC_i\}_{i=1}^m)$, where $V$ is
a set of variables and $\{\msC_i\}_{i=1}^m$ is a collection of constraints. An assignment
to a BCS $\msB$ is a function $\phi:V\to \{\pm1\}$. The function $\phi$ is a satisfying
assignment if $\phi(\msU_i)\in \msR_i$ for all $1\leq i \leq m$. A BCS is
\textbf{satisfiable} if it has a satisfying assignment. For a single constraint $\msC_i$
we denote the set of satisfying assignments by $sat(\msC_i)=\{\phi:\msU_i \to \{\pm
1\}:\phi(\msS_i)\in \msR_i\}$.

For $z\in \{\pm1\}^k$, and for each $k$ -ary relation $\msR$, we can associate the
\textbf{indicator function} $f_\msR:\{\pm1\}^k \to \{\pm1\}$ that evaluates to $f(z)=-1$
whenever $z\in \msR$ and $1$ otherwise. Given an indicator function $f_\msR$ for a $k$-ary
relation $\msR$ we define the \textbf{indicator polynomial}
\begin{equation}\label{eqn:poly_cons}
\msF_\msR(\msU)=\sum_{z\in \{\pm 1\}^k} f_\msR(z)\prod_{j\in \msU}\frac{(1+z_jx_{j})}{2}=\sum_{M\subseteq \msU_i}\gamma_M \prod_{j\in M} x_j,
\end{equation}
for some coefficients $\gamma_M\in \mathbb{R}$. In other words, the indicator polynomial
for a constraint $\msC$ is a real multilinear polynomial in the context $\msU\subset V$.

For a constraint $\msC$, the indicator polynomial $\msF_\msR(\msU)$ takes the value
$-1$ whenever it is evaluated on a satisfying assignment. Hence, any
propositional formula over the boolean domain has a representation as a multilinear
polynomial. We give a few simple examples:
\begin{example}
For $x,y\in \{\pm 1\}$, the indicator polynomial for $\mathsf{NOT}$ is
$\msF_{\mathsf{NOT}}(x)=-x$, and the $\mathsf{XOR}$ polynomial is given by
$\msF_{\mathsf{XOR}}(x,y)=xy$. The polynomial $\mathsf{AND}(a_1,a_2)=a_1a_2$ becomes the
$\pm1$ values polynomial
$\widetilde{\mathsf{AND}}(x_1,x_2)=\frac{1}{2}(1+x_1+x_2-x_1x_2)$.
\end{example}

Given a BCS $\msB$ we can define a two-player \textbf{BCS nonlocal game} $\mcG(\msB)$. In
the game, Alice receives a constraint $\msC_i$ for some $i\in \mcI_A$ with $|\mcI_A|=m$
and replies with an assignment $\phi$ to $\msC_i$. Meanwhile, Bob receives a single
variable $x_j$ for $j\in \mcI_B$ with $|\mcI_B|=n$ and replies with an assignment to the
single variable $\varphi \in \{\pm1\}$. They win the game if $\phi$ satisfies $\msC_i$ and
their assignment are consistent, that is $\phi(x_j)=\varphi$ for all $x_j\in \msU_i$,
otherwise they lose. The probability distribution for the game describes the probability
of Alice being given the constraint $\msC_i$ and Bob the variable $x_j$. If $\phi$ is a
satisfying assignment for $\msB$, then the players can always win with probability 1 by
employing the strategy where they both use $\phi$ to obtain their answers.

For a BCS nonlocal game, the questions and answers are indexed by $\mcI_A=[m]$,
$\mcI_B=[n]$, $\mcO_A=sat(\msC_i)$, and $\mcO_B=\{\pm1\}$. In this work, we assume the
distribution of questions is uniform on the variables and constraints. A quantum strategy
$\mcS$ for a BCS game consists of PVMs $\{\{P_{a}^i\}_{a\in sat(\msC_i)} :1\leq i \leq
m\}$ and $\{\{Q_b^j\}_{b \in \Z_2}:1\leq j \leq n\}$. That is Alice has a projective
measurement system over satisfying assignments to each constraint, and Bob has a projective
measurement system over the $\pm1$-assignments to each variable.

As mentioned earlier, a convenient way to analyze quantum strategies for BCS nonlocal
games is in terms of the \emph{bias} rather than the winning probability. Recall that the
bias of a strategy $\mcS$ is the probability of winning minus the probability of losing.
When considering the bias of a quantum strategy for a BCS nonlocal game, the relevant $\pm
1$-valued observables are:
\begin{equation}\label{eqn:obs_def}
Y_{ij}=\sum_{a\in sat(\msC_i)} {a_j} P_{a}^i,\text{ and } X_j=\sum_{b \in\Z_2}b Q_{b}^j.
\end{equation}
To see why, recall that the winning probability on inputs $(i,j)$ is given by
\begin{equation}
p_{ij}(\mcS)=\sum_{b\in \Z_2}\sum_{\substack{a\in sat(\msC_i)\\a_j=b}}p(a,b|i,j).
\end{equation}
Let $\beta_{ij}(\mcS)=\langle\psi|Y_{ij}\otimes X_j|\psi\rangle$ be the bias on question
$(i,j)$ with quantum strategy $\mcS$ and observe that
\begin{align*} \langle\psi|Y_{ij}\otimes X_j|\psi\rangle
=&\sum_{a\in sat(\msC_i)}\sum_{b\in \Z_2}a_{j}\cdot b\langle\psi|P_{a}^i\otimes
Q_b^j|\psi\rangle\\&
=2\left[\sum_{b\in \Z_2}\sum_{\substack{a\in sat(\msC_i)\\a_j=b}}p(a,b|i,j)\right]-1\\
&=2p_{ij}(\mcS)-1.
\end{align*} It follows that $\beta_{ij}(\mcS)=1$ if and
only if $p_{ij}(\mcS)=1$, hence we can characterize perfect quantum strategies in terms of
these $\pm1$-valued observables.

\begin{proposition}\label{prop:perfect_bcs}
Let $\mcS$ be an $\epsilon$-perfect quantum strategy for a BCS game $\mcG(\msB)$ with
vector state $|\psi\rangle\in H_A\otimes H_B$. Then there is a collection of
$\pm1$-valued observables $\{Y_{ij}\}_{i,j\in [m]\times [n]}$ in $\mcL(H_A)$ and
$\{X_j\}_{j=1}^n$ in $\mcL(H_B)$ such that $[Y_{ij},Y_{ik}]=0$ for all $j,k$, and $\langle
\psi|Y_{ij}\otimes X_j|\psi\rangle\geq 1-O(\epsilon)$, whenever $\mcV(a,b,i,j)=1$, for all
$i,j\in [m]\times [n]$.
\end{proposition}

\begin{proof}
Take $Y_{ij}$ and $X_j$ to be the operators defined in \cref{eqn:obs_def} for all $1\leq i
\leq m$ and $1\leq j\leq n$. By construction each $Y_{ij}$ and $X_j$ is a self-adjoint
unitary. Moreover, the commutation relations $[Y_{ij},Y_{ik}]=0$ hold, since each $Y_{ij}$
and $Y_{ik}$ is a $\Z_2$-linear combinations of orthogonal projections
$P_{\underline{a}}^i$ for all $1\leq i \leq m$. Lastly, if there is an $\epsilon$-perfect
strategy then $p_{ij}(\mcS)\geq 1-\epsilon$ and
\begin{equation}
 \langle\psi|Y_{ij}\otimes X_j|\psi\rangle=2p_{ij}(\mcS)-1\geq 1-2\epsilon,
\end{equation}
whenever $\mcV(a,b,i,j)=1$, for all $i,j\in [m]\times [n]$ as desired.
\end{proof}

Cleve and Mittal observed that the collection of mutually commuting $\pm1$-valued
observables $X_{j}$ derived from a perfect quantum strategy for a BCS nonlocal game
satisfy the multilinear polynomials in \cref{eqn:poly_cons}, see \cite{CM14}[Theorem 1].
Hence, perfect quantum strategies for a BCS game correspond to a \emph{matrix-valued}
satisfying assignments to the BCS.

\begin{definition} Given a BCS $\msS$ a \emph{matrix-valued satisfying assignment} for
$\msB$ is a collection of mutually commuting $\pm1$-valued observables $\{X_1,\ldots,
X_n\}$ such that \begin{equation} \msF_{\msR_i}(X_{j_1},\ldots,X_{j_k})=-\Id,
\end{equation} for all constraints $(\msU_i,\msR_i)$, where $j_1,\ldots,j_k\in \msU_i$ is
an abuse of notation to indicate the index of the variables appearing in $\msU_i$, for
$1\leq i \leq m$. Moreover, it is not hard to see that we can write the constraint
polynomial as a difference of projections \begin{align}
\msF_{\msR_i}(\msU_i)&=\sum_{\phi(\msU_i)\notin \msR_i}\prod_{j\in
\msU_i}\frac{\Id+\phi(x_j)X_j}{2}-\sum_{\phi(\msU_i)\in \msR_i}\prod_{j\in
\msU_i}\frac{\Id+\phi(x_j)X_j}{2}\\ &=2\sum_{\phi(\msU_i)\notin \msR_i}\prod_{j\in
\msU_i}\frac{\Id+\phi(x_j)X_j}{2}-\Id. \end{align} For convenience, we will denote the
orthogonal projections \begin{equation} \Pi_{\phi,i}(X_j)=\prod_{j\in
\msU_i}\frac{\Id+\phi(x_j)X_j}{2}, \end{equation} where $\phi:\msU_i\to \{\pm1\}$ and
$1\leq i \leq m$. \end{definition}

It is not hard to see that the converse is also true. Specifically, given a matrix
satisfying assignment of $\pm1$-valued observables $\{X_1,\ldots, X_n\}$ of dimension $d$,
the BCS nonlocal game can be played perfectly using a strategy where Bob employs the
observables $\Id\otimes X_j$, Alice employs the observables $Y_{ij}\otimes
\Id={X_j}^\top\otimes \Id$ for all $1\leq i \leq m$ and $1\leq j\leq n$, and they share a
maximally entangled state $|\tau\rangle=\sum_{i=1}^d|i\rangle \otimes |i\rangle$. We refer
the reader to the details in \cite{CM14,Ji13}.

The BCS algebra is based on the observation that polynomial constraint relations
abstractly define perfect strategies for BCS nonlocal games.

\begin{definition}\label{def:BCS_algebra}
The \textbf{BCS algebra} $\mcB$ of a boolean
constraint system $\msB$ is the self-adjoint unitary algebra $\mcU_n$ (defined in \cref{eq:sau_alg}) subject to the additional relations: \begin{enumerate}[(1)]
\item $\msF_{\msR_i}(\msU_i)=-1$ for all $1\leq i \leq m$, and
\item  $x_{j}x_\ell=x_\ell x_{j}$ whenever $x_j,x_{\ell}\in \msU_i$, for all $1\leq i\leq m$.
\end{enumerate}
Here, each $\msF_{\msR_i}(V)=\msF_{\msR_i}(\msU_i)$ is the multilinear indicator
polynomial for the constraint $\msC_i$ with context $\msU_i$.
\end{definition}

In a perfect strategy for BCS nonlocal games, one can show that Alice and Bobs' measurement
operators must be the same on the support of the state (up to transposition in the Schmidt
basis of the state vector). For self-adjoint operators, the transpose is equivalent to the
conjugate (in that basis). The following lemma shows that a similar result holds in the
approximate case.

\begin{lemma}\label{lem:frob_com}
Let $X$ and $Y$ be self-adjoint unitary operators on a finite-dimensional Hilbert space $H$
and $|\psi\rangle\in H\otimes H$. Then,
\begin{equation}
\langle \psi|X\otimes Y|\psi\rangle\geq 1-O(\epsilon),
\end{equation} if and only if 
\begin{equation}
\|Y\lambda-\lambda \overline{X}\|_F\leq O(\epsilon^{1/2}),
\end{equation}
where $\lambda$ is the square root of the reduced density
matrix $\rho$ for the state $|\psi\rangle$.
Moreover, in any case where the above
holds we also have that \begin{enumerate}[(i)] \item $\|Y\lambda-\lambda
Y\|_F\leq O(\epsilon^{1/2})$, and \item
$\|\overline{X}\lambda-\lambda\overline{X}\|_F\leq O(\epsilon^{1/2})$.
\end{enumerate}
\end{lemma}

The proof can be found in \cite{SV18}[Proposition 5.4]. By combining this result with
\cref{prop:perfect_bcs} we obtain a simple corollary.

\begin{corollary}\label{cor:close_eps}
Let $\mcS$ be a quantum strategy for a BCS game $\mcG(\msB)$ presented in terms of $\pm
1$-valued observables $ \{Y_{ij}\}_{i,j=1}^{m,n}$, and $\{X_{j}\}_{j=1}^n$, along with a
maximally entangled state $|\tau\rangle\in H_A\otimes H_B$. If
\begin{equation}
\epsilon^2\geq \|Y_{ij}^\top-X_j\|_f^2=2(1-\langle\tau|Y_{ij}\otimes X_j|\tau\rangle)
\end{equation}
for all $1\leq i \leq m$, and $1\leq j\leq n$, then $p_{ij}(\mcS)\geq1-
\frac{\epsilon^2}{4}$ for all question pairs $(i,j)$ and $\mcS$ is
$\epsilon^2/4$-perfect.
\end{corollary}

We now show that Bob's (or analogously Alice's) operators in any $\epsilon$-perfect
strategy to the BCS game is a state-dependant approximate representation of the BCS
algebra $\mcB(\mcG)$. In the remainder of the section, we assume that the state
$|\psi\rangle$ is fully supported on the spaces $H_A$ and $H_B$. In this case, the
$\rho$-seminorm is a norm on the space of linear operators on $H_B$ (or $H_A$).

\begin{proposition}\label{prop:BCS_rho_approx}
Let $\rho=\lambda^*\lambda$ on $H_B$ be the reduced density matrix of the state
$|\psi\rangle\in H_A\otimes H_B$ on $H_B$. If
$(\{Y_{ij}\}_{i,j=1}^{m,n},\{X_{j}\}_{j=1}^n,|\psi\rangle\in H_A\otimes H_B)$ is an
$\epsilon$-perfect strategy for the BCS game $\mcG$, then $\{X_j\}_{j=1}^n$ is an
$(O(\epsilon^{1/2}),\rho)$-representation of the BCS algebra $\mcB(\mcG)$. Moreover, the
approximate representation is $O(\epsilon^{1/2})$-tracial.
\end{proposition}

\begin{proof}
A priori each $X_j$ is a self-adjoint unitary, so all that remains is to establish that
\begin{enumerate}[(a)]
\item $\|\msF_{\msR_i}(V)+\Id\|_\rho\leq
O(\epsilon^{1/2})$ for all $1\leq i\leq m$, and \item $\|X_\ell X_j-X_jX_\ell\|_\rho\leq
O(\epsilon^{1/2})$ whenever $X_\ell,X_k\in \msU_i$, for all $1\leq i\leq m$.
\end{enumerate}
For (a), abusing some notation we let $\msU_i$ be the variables that appear in the $i$th
constraint polynomial $\msF_{\msR_i}$. For convenience, we define
$Z_{ij}:=\overline{Y_{ij}}$ for all $1\leq i\leq m$ and $1\leq j\leq n$. Then, it is not
too hard to show that
\begin{equation}\label{eqn:monomials}
\left\|\prod_{j\in \msU_i} X_j\lambda-\lambda\prod_{j\in \msU_i}Z_{ij}\right\|_F\leq
\sum_{j\in \msU_i}\|X_j\lambda-\lambda Z_{ij}\|_F,
\end{equation}
for any $M\subset V$.

Since Alice's observables are a matrix-valued satisfying assignment, we have that
$\msF_{\msR_i}(Z_{i_1},\ldots,Z_{i_k})=-\Id$. Thus, since $\|X_j\lambda-\lambda
Z_{ij}\|_F\leq O(\epsilon^{1/2})$ for each $j \in M$ by \cref{lem:frob_com}, we can deduce
that \begin{align*} \|\msF_{\msR_i}(\msU_i)+\Id\|_\rho &\leq \left\| \sum_{S\subset
\msU_i}\gamma_M \prod_{j\in M}X_j\lambda-\lambda( -\Id)\right\|_F\\ &\leq \sum_{S\subset
\msU_i}|\gamma_M| \left\|\prod_{j\in M}X_j\lambda-\lambda\prod_{j\in
M}Z_{ij}\right\|_F=O(\epsilon^{1/2}), \end{align*} as desired. To see that (b) holds, we
observe that
\begin{equation}\label{eqn:2mon}
\|X_kX_j-X_jX_k\|_\rho\leq
\|X_kX_j\lambda-\lambda Z_{ij}Z_{ik}\|_F+ \|X_jX_k\lambda-\lambda
Z_{ik}Z_j\|_F,
\end{equation}
since the variables $\{Z_{j}:j\in \msU_i\}$ all commute by virtue of Alice
employing a valid quantum strategy. Now we apply the inequality from
\cref{eqn:monomials} to \cref{eqn:2mon} and conclude that (b) holds by
\cref{lem:frob_com}. Lastly, the approximate tracial property $\|X_j\lambda-\lambda
X_j\|^2_F= O(\epsilon)$ for all $1\leq j\leq n$ follows directly from the second statement
of \cref{lem:frob_com}.
\end{proof}

We immediately obtain the following corollary.

\begin{corollary}\label{cor:BCS}
If the state $|\psi\rangle\in H_A\otimes H_B$ in the $\epsilon$-perfect strategy $\mcS$
for a BCS game $\mcG$ is maximally entangled, then the operators $\{X_j\}_{j=1}^n$ are a
state-independent $O(\epsilon^{1/2})$-representation of the BCS algebra $\mcB$ on $H_B$.
\end{corollary}

For arbitrary states we employ the rounding lemma (\cref{lem:rounding}) to obtain our main
result.

\begin{proposition}\label{prop:res1}
If $\mcS$ is an $\epsilon$-perfect strategy for a BCS nonlocal game $\mcG$, then
restricted to a non-zero subspace of $H_B$, Bob's measurement operators are
state-independent $O(\epsilon^{1/4})$-representation of the BCS algebra $\mcB(\mcG)$.
\end{proposition}

\begin{proof}
It follows from \cref{prop:BCS_rho_approx} that any $\epsilon$-perfect strategy for a BCS
game $\mcG$ with reduced density matrix $\rho$, gives a state-dependent
$(O(\epsilon^{1/2}),\rho)$-representation of the BCS algebra $\mcB(\mcG)$ that is
$O(\epsilon^{1/2})$-tracial. Since the operators in any strategy are $\pm1$-valued
observables the approximate representation already satisfies the self-adjoint unitary
relations exactly. Hence, the result follows by applying \cref{lem:rounding}, which
results in an $O(\epsilon^{1/4})$-representation of the BCS algebra in the
$\|\cdot\|_f$-norm.
\end{proof}

To prove \cref{thm:second}(1) we show that each $\epsilon$-representations of the BCS
algebra can be used to obtain near-perfect strategies for the corresponding BCS nonlocal
game provided the players share a maximally entangled state. For this result, we appeal to
the stability of the group algebra $\C\Z_2^k$.

\begin{proposition}\label{prop:BCS_alg}
If $\phi$ is a bounded $\epsilon$-representation of the BCS algebra $\mcB$ on a
finite-dimensional Hilbert space $H_B$, then there is a $O(\epsilon^2)$-perfect strategy
to the BCS game $\mcG$ using the maximally entangled state $|\tau \rangle\in H_B\otimes
H_B$.
\end{proposition}

\begin{proof}
Let $\varphi$ be an $\epsilon$-representation of $\mcB(\mcG)$ on $H_B$. For a fixed
constraint $\msC_i$, consider the $\epsilon$-representation restricted to the context
$\msU_i$. On these variables $\varphi$ is an $\epsilon$-representation of
$\C\Z_2^{|\msU_i|}$. By \cref{lem:z2_stab} the algebra $\C\Z_2^k$ is stable. In
particular, there exists an exact representation $\psi_i$ of $\C\Z_2^{|\msU_i|}$ such that
$\|\varphi(x_j)-\psi_i(x_j)\|_f\leq C\epsilon$ for all $j\in \msU_i$. Define the
$\pm1$-observables $W_{ij}=\psi_i(x_j)$ for all $j\in \msU_i$. The collection of
observables $\{W_{ij}\}_{j\in \msU_i}$ correspond to measuring an assignment
$\phi:\msU_i\to \{\pm 1\}$, however, it may not be a satisfying assignment for $\msC_i$.
In particular, if $\Pi_{\phi,i}(W_{ij})=\prod_{j\in
\msU_i}\tfrac{1}{2}(\Id+\phi(x_j)W_{ij})$ then the the projection onto unsatisfying
assignments $\sum_{\phi(\msU_i)\notin\msR_i}\Pi_{\phi,i}(W_{ij})$ may be non-zero.
However, this happens if and only if an irreducible block of $W_{ij}$ with corresponding
weight $\nu(x_j) \in \{\pm 1\}$ does not agree with $\phi(x_j)$ (i.e.~ if
$\phi(x_j)=-\nu(x_j)$). Hence, multiplying each unsatisfying irreducible block in each
$W_{ij}$ by $-1$, leaving the other blocks untouched, we obtain a new $\pm1$-observable
$Y_{ij}$. Repeating this for each $j\in \msU_i$ gives us a new collection of observables
for which $\sum_{\phi(\msU_i)\notin\msR_i}\Pi_{\phi,i}(Y_{ij})=0$. More precisely, define
the new collection of observables \begin{equation}
Y_{ij}^\top=\sum_{\phi(\msU_i)\notin\msR_i}
(\Id-\Pi_{\phi,i}(W_{ij}))W_{ij}-\sum_{\phi(\msU_i)\notin\msR_i}\Pi_{\phi,i}(W_{ij})W_{ij},
\end{equation} for all $j\in \msU_i$. In addition to these observables being a satisfying
assignment for $\msC_i$, we have that \begin{align*}
\|W_{ij}-Y_{ij}^\top\|_f&=\|W_{ij}-\sum_{\phi(\msU_i)\notin \msR_i}
(\Id-\Pi_{\phi,i}(W_{ij}))W_{ij}+\Pi_{\phi,i}(W_{ij})W_{ij}\|_f\\
&=\|\Id-\sum_{\phi(\msU_i)\notin\msR_i}
(\Id-\Pi_{\phi,i}(W_{ij}))+\Pi_{\phi,i}(W_{ij})\|_f\\
&=\|2\sum_{\phi(\msU_i)\notin\msR_i}\Pi_{\phi,i}(W_{ij})\|_f\\ &\leq
2\|\sum_{\phi(\msU_i)\notin\msR_i}\Pi_{\phi,i}(W_{ij})-\sum_{\phi(\msU_i)\notin\msR_i}\Pi_{\phi,i}(X_j))\|_f+\|2\sum_{\phi(\msU_i)\notin\msR_i}\Pi_{\phi,i}(X_{j})\|_f\\
&\leq
2\sum_{\phi(\msU_i)\notin\msR_i}\|\Pi_{\phi,i}(W_{ij})-\Pi_{\phi,i}(X_j)\|_f+\epsilon\\
&\leq 2\sum_{j\in \msU_i}\|W_{ij}-X_j\|_f+\epsilon\\ &\leq 2\sum_{j\in
\msU_i}\|\psi_i(x_j)-\varphi(x_j)\|_f+\epsilon\\ &=(2|\msU_i|C+1)\epsilon, \end{align*}
for all $j\in \msU_i$. This means that the number of assignments in $W_{ij}$ that we had
to correct (by multiplying by $-1$) was relatively small, provided we have $\epsilon$
small. Now, consider the strategy consisting of observables $\{X_j\}_{j=1}^n$,
$\{Y_{ij}\}_{i=1,j=1}^{m,n}$ and a maximally entangled state $|\tau\rangle\in H_B \otimes
H_B$. By \cref{cor:close_eps} the strategy is $((Cn+1)\epsilon)^2$-perfect, since
$\|Y_{ij}^\top-X_j\|_f\leq \|Y_{ij}^\top-W_{ij}\|_f+\|W_{ij}-X_j\|_f=2(Cn+1)\epsilon$, for
all $1\leq j \leq n$, and $1\leq i \leq m$.
\end{proof}

\begin{remark}
The only caveat in the case where $\phi$ is not bounded a priori is that the resulting
approximate representations could depend on $\kappa_\phi$. If this quantity is independent
of $d$ then the result will still be Hilbert space free in this case as well.
\end{remark}

As a corollary, we obtain the final result of this section.

\begin{corollary}
For any $\epsilon$-perfect quantum strategy $\mcS$ for a BCS nonlocal game there is an
$O(\epsilon^{1/2})$-perfect quantum strategy $\tilde{\mcS}$ using a maximally entangled
state $|\tilde{\psi}\rangle$, such that each measurement in $\tilde{\mcS}$ is at most
$O(\epsilon^{1/4})$-away from the measurement in $\mcS$ with respect to $\|\cdot\|_f$ on
the local support of $|\tilde{\psi}\rangle$ on $H_B$.
\end{corollary}

\subsection{Synchronous nonlocal games}\label{sec:synch_games}

Synchronous games are class of nonlocal games introduced in \cite{PSSTW16}. In a
synchronous nonlocal game $\mcG$, the set of questions and answers are the same for each
player, i.e.~ $\mcI_A=\mcI_B=\mcI$ and $\mcO_A=\mcO_B=\mcO$. Additionally, in a
synchronous game, the players lose whenever they give different answers to the same
question. This latter property is called the \emph{synchronous condition}. Like BCS
nonlocal games there is a finitely presented $*$-algebra we can associate with each
synchronous game. Unlike BCS algebras, the associated game algebra is typically presented
in terms of collections of orthogonal projections:

\begin{definition}\label{def:synch_alg}
The \textbf{synchronous algebra} $\mcA(\mcG)$ of a synchronous game $\mcG$ is a quotient
of the PVM algebra $\mcA_{PVM}^{(\mcI,\mcO)}$ (see \cref{def:pvm_alg}) satisfying the
additional rule relations: for all $(a,b,i,j)\in \mcO^2\times \mcI^2$ if
$\mcV(a,b,i,j)=0$, then ${p_{a}^{i}}{p_{b}^{j}}=0$.
\end{definition}

The class of synchronous algebras is well studied in the literature, and typically in the
context of the more general commuting-operator strategies, see for instance
\cite{PT15,DP16,HMPS19}. It is not hard to see that any representation of the synchronous
on a finite-dimensional Hilbert space $H$, along with a maximally entangled state gives a
perfect quantum strategy for the associated synchronous case. The other direction is less
straightforward, but also true as a result of \cite{PT15}[Theorem 5.5]. Hence, a synchronous nonlocal
game has a perfect quantum strategy $\mcS$, if and only if Bob's PVM measurement operators
$\{\{Q_a^i\}_{a\in\mcO}:i\in \mcI\}$ restricted to the support of the reduced density
matrix $\rho_B$ forms a representation of the synchronous algebra associated with $\mcG$.
To analyze near-perfect strategies for synchronous games, we start by mentioning the
following simple lemma.

\begin{lemma}(\cite{Slof11}[Lemma 4.1]) \label{lem:frob_norm_state}
If $E$ and $F$ are self-adjoint operators on Hilbert spaces $H_A$ and $H_B$ respectively,
and $\lambda=\rho^{1/2}$ is the reduced density matrix of a pure state $|\psi\rangle\in
H_A\otimes H_B$ on $H_B$, then \begin{equation*}\|(E\otimes \Id-\Id\otimes
F)|\psi\rangle\|=\|\lambda \overline{E}-F\lambda\|_F, \end{equation*} where $\overline{E}$
is the entry-wise conjugate taken with respect to the basis of eigenvectors for
$\lambda\in \mcL(H_B)$. Therefore if $\|(E\otimes \Id-\Id\otimes
F)|\psi\rangle\|_{H_A\otimes H_B}\leq \epsilon$ then $\| E\lambda-\lambda
\overline{F}\|_F\leq \epsilon$.
\end{lemma}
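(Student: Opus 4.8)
The plan is to prove Lemma \ref{lem:frob_norm_state} via the operator--vector (``vectorization'') correspondence, under which a bipartite vector is identified with a Hilbert--Schmidt operator, the Euclidean norm becomes the Frobenius norm, and operators on the $A$-tensor factor act by left multiplication while operators on the $B$-factor act by right multiplication by the transpose (the ``ricochet''/transpose trick). Concretely, I would fix orthonormal bases of $H_A$ and $H_B$; since $\lambda\in\mcL(H_B)$ we may take $\dim H_A=\dim H_B=d$ and identify the two bases, so that $|i\rangle_A\otimes|j\rangle_B\mapsto |j\rangle\langle i|$ defines a linear isometry $V$ from $(H_A\otimes H_B,\|\cdot\|)$ onto $(M_d(\C),\|\cdot\|_F)$, since it carries the standard orthonormal basis of the former to the orthonormal basis of matrix units of the latter. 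Setting $\lambda:=V|\psi\rangle$ and unwinding the definition of the reduced density matrix shows that $\rho=\lambda^*\lambda$ is precisely the reduced density matrix of $|\psi\rangle$ on $H_B$ read off in the basis used to form $\lambda$, and the entrywise conjugate $\overline{E}$ in the statement is taken with respect to that same basis.

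The one computation to carry out is the ricochet identity: one checks on a spanning set that for $A\in\mcL(H_A)$ and $B\in\mcL(H_B)$ one has $V\big((A\otimes\Id)|\psi\rangle\big)=\lambda A^{T}$ and $V\big((\Id\otimes B)|\psi\rangle\big)=B\lambda$. On a basis vector $|i\rangle_A\otimes|j\rangle_B$ both identities are immediate from $A|i\rangle=\sum_k A_{ki}|k\rangle$. Applying this with $A=E$ and $B=F$, and using that $E$ is self-adjoint so that $E^{T}=\overline{E}$, gives
\[
V\big((E\otimes\Id-\Id\otimes F)|\psi\rangle\big)=\lambda\,\overline{E}-F\lambda .
\]
Since $V$ is an isometry, $\|(E\otimes\Id-\Id\otimes F)|\psi\rangle\|=\|\lambda\overline{E}-F\lambda\|_F$, which is the claimed equality.

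For the ``therefore'' clause, I would repeat the argument with the dual identification $|i\rangle_A\otimes|j\rangle_B\mapsto|i\rangle\langle j|$, which carries $|\psi\rangle$ to an operator $\mu$ (the transpose of $\lambda$, hence again a square root of a reduced density matrix of $|\psi\rangle$) for which $(E\otimes\Id)$ acts by left multiplication and $(\Id\otimes F)$ by right multiplication by the transpose; this yields $\|(E\otimes\Id-\Id\otimes F)|\psi\rangle\|=\|E\mu-\mu\overline{F}\|_F$, so in the notation of the statement $\|(E\otimes\Id-\Id\otimes F)|\psi\rangle\|\le\epsilon$ forces $\|E\lambda-\lambda\overline{F}\|_F\le\epsilon$.

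The main obstacle here is not analytic but notational: one must pin down the vectorization dictionary consistently with how $\lambda$ is extracted from $\rho$ and $|\psi\rangle$, and keep careful track of which side of $\lambda$ each operator lands on and where the entrywise conjugate (coming from self-adjointness turning the transpose into conjugation in the chosen basis) appears. Once the dictionary is fixed, the lemma is a one-line isometry computation with no inequality involved, and the ``therefore'' statement is then immediate by symmetry in the two tensor factors.
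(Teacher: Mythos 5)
Your proof is correct and takes essentially the same approach as the paper: the paper expands $|\psi\rangle=\sum_t|t\rangle\otimes\lambda|t\rangle$, observes $(E\otimes\Id)|\psi\rangle=\sum_t|t\rangle\otimes\lambda\overline{E}|t\rangle$ and $(\Id\otimes F)|\psi\rangle=\sum_t|t\rangle\otimes F\lambda|t\rangle$, and evaluates the norm as a trace, which is just your ricochet/vectorization computation written out coordinate-wise. Your remark that the ``therefore'' clause follows by the dual identification $|i\rangle\otimes|j\rangle\mapsto|i\rangle\langle j|$ is a slightly more careful justification than the paper gives, which simply asserts it.
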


\begin{proposition} \label{prop:proj_ATP}
Suppose $\mcG$ is a synchronous nonlocal game with a uniform distribution on questions.
If $\{\{E_a^i\}_{a\in O}:i\in \mcI\}$ and $\{\{F_a^i\}_{a\in O}:i\in \mcI\}$ are Alice and
Bob PVM's from an $\epsilon$-perfect strategy for $\mcG$, using the state $|\psi\rangle$,
then $\|E_a^i\lambda-\lambda \overline{F_{a}^i}\|_F\leq O(\epsilon^{1/2})$ for all $i\in
\mcI$, and $a\in \mcO$.
\end{proposition}

\begin{proof}
If $\mcS$ is an $\epsilon$-perfect strategy then for any pair of questions $(i,j)$ we have\begin{equation*}
\sum_{a,b\in \mcO\times
\mcO}\mcV(a,b|i,j)p(a,b|i,j)\geq 1-n^2\epsilon.
\end{equation*}
In particular $\mcV(a,b,i,j)=0$ whenever
$a\neq b$, so for $\mcS$ to be $\epsilon$-perfect it must be that $\sum_{a\neq b}\langle
\psi|E_a^i\otimes F_b^i|\psi\rangle\leq n^2\epsilon$ holds, for all $1\leq i\leq m$. Hence, we see that
\begin{align}
\|(E_a^i\otimes \Id-\Id\otimes
F_a^i)|\psi\rangle\|^2=\sum_{b\neq a}\langle \psi|E_a^i\otimes
F_b^i|\psi\rangle+\sum_{a'\neq a}\langle \psi|E_{a'}^i\otimes
F_a^i|\psi\rangle \leq 2n^2 \epsilon,
\end{align}
and the result follows from \cref{lem:frob_norm_state}.
\end{proof}

A synchronous nonlocal game is \textbf{symmetric} if $\mcV(b,a|j,i)=0$ whenever
$\mcV(a,b|i,j)=0$, for all $a,b\in \mcO$, $i,j\in \mcI$. Although not every synchronous
predicate is symmetric, it is not hard to show that every perfect quantum strategy is
symmetric. That is, if $\mcV(a,b,i,j)=0$ and $\mcS$ is a perfect quantum strategy, then
$p(a,b|i,j)=p(b,a|j,i)=0$, see \cite{HMPS19}]Corollary 2.2].

\begin{proposition}\label{lem:near_sym}
Let $\mcS=(\{\{E_a^i\}_{a\in \mcI}:i\in \mcI\},\{\{F_a^i\}_{a\in \mcO}:i\in
\mcI\},|\psi\rangle)$ be a $\epsilon$-perfect strategy for a synchronous nonlocal game
$\mcG$, where $\rho=\lambda^2$ is the reduced density matrix of $|\psi\rangle$ on $H_B$.
If $\mcV(a,b,i,j)=0$ for $a,b\in \mcO$, $i,j\in \mcI$, then $p(b,a|j,i)=p(a,b|i,j)+
O(\epsilon^{1/2})$.
\end{proposition}

\begin{proof}
We observe that
\begin{align*}
p(b,a|i,j)&=\langle \psi|E_b^j\otimes F_a^i|\psi\rangle\\ &=\langle \psi| (E_b^j\otimes
\Id)( \Id \otimes F_a^i) |\psi\rangle\\ &= \langle \psi|(E_b^j\otimes \Id-\Id \otimes
F_b^j)^*(\Id\otimes F_a^i - E_a^i\otimes \Id)|\psi\rangle+\langle\psi| E_a^i\otimes
F_b^j|\psi\rangle\\ &+ \langle \psi|(\Id \otimes F_b^j)(\Id\otimes F_a^i - E_a^i\otimes
\Id)|\psi\rangle+ \langle \psi|(E_b^j\otimes \Id-\Id \otimes F_b^j)(E_a^i\otimes
\Id)|\psi\rangle\\ &\leq \|E_b^j\lambda-\lambda {F_b^j}^T\|_F \|\lambda {E_b^j}^T-
F_b^j\lambda\|_F+\|E_b^j\lambda-\lambda {F_b^j}^T\|_F+\|\lambda {E_b^j}^T-
F_b^j\lambda\|_F\\&+p(a,b|i,j), \end{align*}
the result follows from \cref{prop:proj_ATP}.
\end{proof}

\begin{proposition}\label{prop:synch_approx_rho}
If $\mcS=(\{\{E_a^i\}_{a\in \mcO}:i\in \mcI\},\{\{F_a^i\}_{a\in \mcO}:i\in
\mcI\},|\psi\rangle)$ is an $\epsilon$-perfect strategy for a synchronous game $\mcG$,
where $\rho=\lambda^2$ is the reduced density matrix of $|\psi\rangle$ on $H_B$, then
$\{\{F_a^i\}_{a\in \mcO}:i\in \mcI\}$ is an $O(\epsilon^{1/4})$-representation of the
synchronous game algebra in $\mcL(H_B)$ with respect to the state-induced semi-norm
$\|\cdot\|_\rho$. Moreover, the $O(\epsilon^{1/4})$-representation is
$(O(\epsilon^{1/2}),\lambda)$-tracial.
\end{proposition}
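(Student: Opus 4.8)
The plan is to check directly that the map $\phi\colon p_a^i\mapsto F_a^i$, extended to a $*$-homomorphism out of the free $*$-algebra on the $p_a^i$, carries every relator of $\mcA(\mcG)$ (Definition~\ref{def:synch_alg}) to an operator of $\|\cdot\|_\rho$-norm $O(\epsilon^{1/4})$, and then to extract the tracial estimate in the synchronous case from Proposition~\ref{prop:proj_ATP}. Since $\{F_a^i\}_a$ is an honest PVM on $H_B$, the relators coming from relations (1) and (2) -- that is $(p_a^i)^2-p_a^i$, $p_a^i-(p_a^i)^*$ and $\sum_a p_a^i-1$ -- are sent to $0$, and so is $p_a^i p_b^i$ for $a\neq b$ (this is the $i=j$ instance of relation (3), which the synchronous condition already makes exact). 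The only real content is therefore relation (3) with $i\neq j$: when $V(a,b|i,j)=0$ one must bound $\|F_a^i F_b^j\|_\rho$.

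For that bound I would move back to $H_A\otimes H_B$. Since $\rho$ is Bob's reduced density matrix, $\|T\|_\rho^2=\tr(T^*T\rho)$, and $\Id\otimes F_a^i$ is a projection,
\[
\|F_b^j F_a^i\|_\rho^2=\tr\!\big((F_b^j F_a^i)^*(F_b^j F_a^i)\rho\big)=\big\|(\Id\otimes F_b^j)(\Id\otimes F_a^i)|\psi\rangle\big\|^2 .
\]
Next I use the intermediate estimate in the proof of Proposition~\ref{prop:proj_ATP}, namely $\big\|(E_a^i\otimes\Id-\Id\otimes F_a^i)|\psi\rangle\big\|\le O(\epsilon^{1/2})$ (which is what the displayed inequalities there establish, together with Lemma~\ref{lem:frob_norm_state}). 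Replacing $(\Id\otimes F_a^i)|\psi\rangle$ by $(E_a^i\otimes\Id)|\psi\rangle$ up to an error vector of norm $O(\epsilon^{1/2})$, and using that $\Id\otimes F_b^j$ has operator norm $1$ and commutes with $E_a^i\otimes\Id$,
\[
\|F_b^j F_a^i\|_\rho\le\big\|(E_a^i\otimes F_b^j)|\psi\rangle\big\|+O(\epsilon^{1/2})=\sqrt{\langle\psi|E_a^i\otimes F_b^j|\psi\rangle}+O(\epsilon^{1/2}) .
\]
Finally $\epsilon$-perfectness forces $\langle\psi|E_a^i\otimes F_b^j|\psi\rangle=p(a,b|i,j)\le\epsilon$ whenever $V(a,b|i,j)=0$: the winning outcomes on input $(i,j)$ carry probability $\ge 1-\epsilon$, while the remaining outcomes are nonnegative and sum to $1-p_{ij}(\mcS)$. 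Hence $\|F_b^j F_a^i\|_\rho=O(\epsilon^{1/2})$, in particular $O(\epsilon^{1/4})$; the same bound for $\|F_a^i F_b^j\|_\rho$ follows by running the argument on the pair $((j,b),(i,a))$, since for a synchronous game $V(a,b|i,j)=0 \iff V(b,a|j,i)=0$.

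For the ``moreover'', assume $\mcS$ is a synchronous strategy, so $E_a^i=\overline{F_a^i}$ with the conjugation taken in the Schmidt basis of $|\psi\rangle$; in that basis $\lambda=\rho^{1/2}$ is real. Proposition~\ref{prop:proj_ATP} then says $\|\overline{F_a^i}\lambda-\lambda\overline{F_a^i}\|_F\le O(\epsilon^{1/2})$, and taking the entrywise conjugate (which preserves $\|\cdot\|_F$ and fixes the real matrix $\lambda$) gives $\|F_a^i\lambda-\lambda F_a^i\|_F\le O(\epsilon^{1/2})$ for every generator. That is exactly the statement that $\phi$ is $(O(\epsilon^{1/2}),\lambda)$-tracial, i.e.\ $O(\epsilon^{1/2})$-ATP in the sense of Definition~\ref{defn:tracial}.

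The step I expect to be the main obstacle is the lack of $*$-invariance of the seminorm $\|\cdot\|_\rho$: one has $\|T^*\|_\rho=\|\rho^{1/2}T\|_F$ but $\|T\|_\rho=\|T\rho^{1/2}\|_F$, so the two orderings $F_a^i F_b^j$ and $F_b^j F_a^i$ are not interchangeable a priori. The argument above cleanly controls the ordering in which the operator matched to Alice's side sits on the outside; trying to control the other ordering head-on forces a commutator $[F_a^i,F_b^j]$ in $\rho$-norm, which one cannot bound for a general $\epsilon$-perfect strategy. I resolve this by invoking the input--output symmetry of a synchronous game (so both orderings reduce to the same estimate), equivalently by taking the relator set of $\mcA(\mcG)$ to be closed under the involution. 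Throughout one must keep the hidden constants independent of $\dim H_B$, which is automatic since the only inexplicit constant comes from Proposition~\ref{prop:proj_ATP}.
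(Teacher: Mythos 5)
Your proof is correct and takes a genuinely different, and in fact cleaner, route than the paper's. The paper works entirely on $H_B$: it expands $\|F_a^iF_b^j\|_\rho^2 = \tr(\lambda F_b^jF_a^iF_b^j\lambda)$ and peels off two copies of the commutation error $\|F_b^j\lambda-\lambda\overline{E_b^j}\|_F$ via Cauchy--Schwarz, landing on $\tr(\overline{E_b^j}\lambda F_a^i\lambda)\le\epsilon+O(\epsilon^{1/2})$, i.e.\ it bounds the \emph{square} of the $\rho$-norm by $O(\epsilon^{1/2})$ and hence the norm by $O(\epsilon^{1/4})$. You instead lift to the bipartite space via the exact identity $\|T\|_\rho=\|(\Id\otimes T)|\psi\rangle\|$, insert a single swap $(\Id\otimes F_a^i)|\psi\rangle\approx(E_a^i\otimes\Id)|\psi\rangle$ at the level of vectors, and exploit that $\Id\otimes F_b^j$ is a contraction commuting with $E_a^i\otimes\Id$. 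This is a one-step triangle inequality that bounds the norm directly (not its square) by $\sqrt{\epsilon}+O(\epsilon^{1/2})=O(\epsilon^{1/2})$ -- quantitatively \emph{stronger} than the paper's $O(\epsilon^{1/4})$. Your treatment of the tracial part is also slightly more careful: you say explicitly why the entrywise conjugate may be applied (choosing the Schmidt basis so $\lambda$ is real), a point the paper passes over.

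On the ordering concern you flag: you are right that $\|\cdot\|_\rho$ is not $*$-invariant, and your argument natively produces $\|F_b^jF_a^i\|_\rho\le O(\epsilon^{1/2})$ for the tuple with $V(a,b|i,j)=0$, whereas the relator in Definition~\ref{def:synch_alg} as written is $p_a^ip_b^j$. Your proposed fix -- relabelling via $V(a,b|i,j)=0\iff V(b,a|j,i)=0$ -- is indeed an unstated symmetry assumption. It is worth noting, though, that the paper's own chain reduces to $\tr(\overline{E_b^j}\lambda F_a^i\lambda)=\langle\psi|E_b^j\otimes F_a^i|\psi\rangle=p(b,a|j,i)\le\epsilon$, which requires exactly $V(b,a|j,i)=0$ as well, so the paper's proof carries the same implicit hypothesis. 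Either one takes the relator set $R$ to be $*$-closed (under which your estimate covers every relator in $R$) or one symmetrizes the synchronous game, which does not change $\mcA(\mcG)$; in either case your argument closes. The bound $\langle\psi|E_a^i\otimes F_b^j|\psi\rangle\le\epsilon$ on each losing pair uses the per-input-pair reading $p_{ij}(\mcS)\ge1-\epsilon$ of ``$\epsilon$-perfect,'' which is what the paper uses throughout, so that is consistent.
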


\begin{proof}
Since $\mcS$ is $\epsilon$-perfect, whenever
$\mcV(a,b|i,j)=0$ we have that 
\begin{equation*}
p(a,b|i,j)=\langle \psi|E_a^i\otimes F_b^j|\psi\rangle
=\tr(\overline{E_a^i}\lambda F_b^j\lambda)\leq n^2\epsilon
\end{equation*}
hence, by \cref{lem:near_sym} we have $p(b,a|j,i)=\tr(\overline{E_b^j}\lambda
F_a^i\lambda)= O(\epsilon^{1/2})$.
From Cauchy-Schwarz we deduce that
\begin{align*}
\|F_a^iF_b^j\|^2_\rho&=\|F_a^iF_b^j\lambda\|_F^2\\
&=\tr(\lambda F_b^jF_a^iF_b^j\lambda)\\ &=\tr(\lambda F_b^jF_a^i(F_b^j\lambda-\lambda
\overline{E_b^j}))+\tr(\lambda F_b^jF_a^i\lambda\overline{E_b^j})\\
&\leq \|F_a^iF_b^j\lambda\|_F\|F_b^j\lambda-\lambda
\overline{E_b^j}\|_F+ \tr(F_a^i\lambda\overline{E_b^j}\lambda F_b^j))\\
&=  \|F_a^iF_b^j\lambda\|_F\|F_b^j\lambda-\lambda
\overline{E_b^j}\|_F+ \tr(F_a^i\lambda\overline{E_b^j}(\lambda F_b^j-\overline{E_b^j}\lambda))+
\tr(F_a^i\lambda\overline{E_b^j}^2\lambda)\\
&\leq \|F_a^iF_b^j\lambda\|_F\|F_b^j\lambda-\lambda
\overline{E_b^j}\|_F+\|\overline{E_b^j}\lambda
F_a^i\|_F\|\lambda F_b^j-\overline{E_b^j}\lambda\|_F +\tr(\overline{E_b^j}\lambda F_a^i\lambda)\\
&\leq\|F_b^j\lambda-\lambda
\overline{E_b^j}\|_F+\|\lambda F_b^j-\overline{E_b^j}\lambda\|_F +\tr(\overline{E_b^j}\lambda F_a^i\lambda)\\
&=O(\epsilon^{1/2}),
\end{align*}
using \cref{prop:proj_ATP}. Since $\{\{F_a^i\}_{a\in \mcO}: i \in \mcI\}$ is a PVM, each
$F_a^i$ is an orthogonal projection and $\sum_{a\in \mcO} F_a^i=\Id_{H_B}$ for each $i\in
\mcI$, and so the remaining relations in \cref{def:synch_alg} hold automatically. The fact
that the approximate representation is $O(\epsilon^{1/2},\lambda)$-tracial follow from the
second part of \cref{lem:frob_com}, by letting $X_a^i=\Id-2E_a^i$ and $Y_a^i=\Id-2F_a^i$
we obtain self-adjoint unitaries satisfying $\|Y_a^i\lambda-\lambda
\overline{X}_a^i\|_F\leq O(\epsilon^{1/2})$ for all $i\in \mcI$ and $a\in \mcO$,
completing the proof.
\end{proof}

We leave it as an open problem whether this bound on the state-dependent approximate
representation is tight in the degree of $\epsilon$. It seems plausible that
\cref{lem:near_sym} could be improved to $O(\epsilon)$, which would lead to an
$O(\epsilon^{1/2})$-representation in \cref{prop:synch_approx_rho}. Despite this issue, we
obtain the immediate corollary when the state in the strategy is maximally entangled.

\begin{corollary}\label{cor:synch}
Let $\mcS$ be an $\epsilon$-perfect strategy for a synchronous nonlocal game $\mcG$. If
the state $|\psi\rangle$, from the strategy $\mcS$, is maximally entangled with Schmidt
rank equal to $dim(H_B)$, then Bob's measurement operators $\{\{F_a^i\}_{a\in O}:i \in
I\}$ are an $O(\epsilon^{1/4})$-representation of the synchronous algebra on $\mcL(H_B)$
with respect to $\|\cdot\|_f$.
\end{corollary}

Unlike in the BCS case, we cannot directly apply \cref{lem:rounding} to
\cref{prop:synch_approx_rho} and obtain a state-independent approximate representation.
This is because the presentation of the synchronous algebra is in terms of orthogonal
projections and not self-adjoint unitaries. In the next subsection, we show that there is
an alternative presentation of the synchronous algebras as the BCS algebra of a certain
BCS built from the original synchronous game.

\subsection{The SynchBCS algebra of a synchronous nonlocal game}

Although synchronous and BCS games may initially appear different they are essentially the
same. First off, there is a synchronous version of any BCS game by considering the game
where Alice and Bob each receive a constraint $\msC_i$ and $\msC_j$ and must reply with
satisfying assignments. In this ``constraint-constraint'' version of the BCS game, the
players win perfectly if and only if their assignment to all variables in the intersection
of the contexts match. This synchronous version of a BCS game is well-known. In
particular, the authors of \cite{KPS18} employ this idea to construct a synchronous
nonlocal game for which there is a $*$-homomorphism from the synchronous algebra of a
\emph{linear} BCS game to the corresponding $C^*$-algebra of the corresponding
\emph{solution group} of the linear system.

In this work, we focus on the other direction. We consider a ``constraint-variable''
version of a synchronous nonlocal game which we call the SynchBCS game. Given a
synchronous nonlocal game $\mcG$ the SynchBCS game associated with $\mcG$ is the BCS game
where:
\begin{itemize}
\item[-] for each question $i\in \mcI$ and answer $a\in \mcO$, we add a 
$\{\pm1\}$-valued variable $z_a^i$, and
\item[-] in the
synchronous game, whenever $\mcV(a,b|i,j)=0$ we add the constraint
$\widetilde{\text{AND}}(z_a^i,z_b^j)=1$, and
\item[-] to ensure that each ${z_{a}^{i}}$ comes from a single measurement\footnote{For
any $i\in \mcI$ the subset of ${z_{a}^{i}}$'s are jointly measurable, and exactly one
of them outputs a $-1$.}, we add the constraint
$\widetilde{\text{XOR}}_{a\in O}(z_{a}^i)=-1$ for each question $i\in \mcI$.
\end{itemize}
This last constraint prevents two different $-1$'s from each question while ensuring at
least one $-1$ output is given for each input $i$. In this SynchBCS game, the players can
receive any of these constraints and they must reply with a satisfying assignment to the
variables in the context. The distribution on these constraints is informed by the
distribution of questions in the original synchronous game. Since this is a BCS game, we
can consider the corresponding BCS algebra associated with any synchronous game $\mcG$
through this transformation.

\begin{definition}\label{def:synchBCS}
The \textbf{SynchBCS algebra} $\msB(\mathcal{G})$ of the synchronous nonlocal game $\mcG$
is a quotient of the self-adjoint unitary algebra $\mcU_{\mcI\times \mcO}$ (see
\cref{eq:sau_alg}), with self-adjoint unitary generators $\{{z_{a}^{i}}: (i,a)\in
\mcI\times \mcO\}$, satisfying the additional relations:
\begin{enumerate}
\item $\text{$\widetilde{\mathsf{AND}}$}({z_{a}^{i}},z_{b}^{j})=1$, whenever
$\mcV(a,b|i,j)=0$ for all $i,j\in \mcI$ and $a,b\in \mcO$,
\item $\prod_{a\in \mcO}{z_{a}^{i}}=-1$, each $i\in \mcI$,
\item ${z_{a}^{i}}{z_{a'}^{i}}={z_{a'}^{i}}{z_{a}^{i}}$ for all pairs $a,a'\in \mcO$, and each $i\in \mcI$.
\end{enumerate} \end{definition}

By construction, the finite-dimensional representations of the SynchBCS algebras give
quantum satisfying assignment to the associated SynchBCS nonlocal game. Hence, there is a
BCS nonlocal game for each synchronous nonlocal game. In \cite{Fri20,Gol21}, the authors
showed that the synchronous LCS game algebra is isomorphic\footnote{As mentioned, one of
the directions was also established in \cite{KPS18}. We also note that although they use
``BCS'' in their title, their results are only for BCS games with linear constraints.} to
the synchronous algebra of projections. We establish the following complementary result.

\begin{proposition}\label{prop:iso}
The synchronous game algebra $\mcA(\mcG)$ is $*$-isomorphic to the SynchBCS algebra $\msB(\mcG)$.
\end{proposition}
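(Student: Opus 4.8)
The plan is to construct an explicit isomorphism $\Phi : \mcA(\mcG) \to \msB(\mcG)$ by specifying it on generators and checking it is well-defined, bijective, and $*$-preserving. The natural guess is to send each projection $p_a^i$ in $\mcA(\mcG)$ to the projection built out of the $\pm1$-observable $z_a^i$ in $\msB(\mcG)$, namely
\[
\Phi(p_a^i) := \frac{1 - z_a^i}{2},
\]
so that $z_a^i = -1$ (the ``selected'' output) corresponds to $p_a^i = 1$. The inverse should send $z_a^i \mapsto 1 - 2p_a^i$. First I would verify that this assignment respects all the defining relations of $\mcA(\mcG)$: (1) since $z_a^i$ is self-adjoint with $(z_a^i)^2 = 1$, the element $(1-z_a^i)/2$ is a self-adjoint idempotent, giving relation (1) of Definition \ref{def:synch_alg}; (2) the completeness relation $\sum_a p_a^i = 1$ must be derived from the SynchBCS relations (4) $\prod_a z_a^i = -1$ and (5) that the $z_a^i$ (for fixed $i$) pairwise commute, together with relation (3) restricted to $i = j$, which via $\widetilde{\mathsf{AND}}(z_a^i, z_{a'}^i) = 1$ forces $z_a^i z_{a'}^i = \ldots$ — this is where the combinatorial content lies; (3) the orthogonality relation $p_a^i p_b^j = 0$ when $V(a,b|i,j) = 0$ must follow from $\widetilde{\mathsf{AND}}(z_a^i, z_b^j) = 1$.

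The key computation is the $\widetilde{\mathsf{AND}}$ translation. Recall $\widetilde{\mathsf{AND}}(x_1,x_2) = \tfrac12(1 + x_1 + x_2 - x_1 x_2)$, so the constraint $\widetilde{\mathsf{AND}}(z_a^i, z_b^j) = 1$ reads $1 + z_a^i + z_b^j - z_a^i z_b^j = 2$, i.e. $z_a^i z_b^j = z_a^i + z_b^j - 1$. Substituting $z_a^i = 1 - 2p_a^i$ and $z_b^j = 1 - 2p_b^j$ and simplifying should yield exactly $p_a^i p_b^j = 0$ (and in particular this forces $[z_a^i, z_b^j] = 0$ automatically on those pairs, and symmetrically $p_b^j p_a^i = 0$). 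Conversely, starting from $\mcA(\mcG)$ and setting $z_a^i = 1 - 2p_a^i$, I would check that relations (1)–(5) of Definition \ref{def:synchBCS} hold: self-adjointness and square-one are immediate from $p_a^i$ being a projection; relation (3) is the reverse of the computation just described (using $p_a^i p_b^j = 0 = p_b^j p_a^i$, the latter obtained by taking adjoints); relation (5) (commutation of $z_a^i, z_{a'}^i$) follows from $p_a^i p_{a'}^i = 0 = p_{a'}^i p_a^i$ (noted in Definition \ref{def:synch_alg}); and relation (4), $\prod_{a \in A} z_a^i = -1$, follows by expanding $\prod_a (1 - 2 p_a^i)$ using pairwise orthogonality of the $\{p_a^i\}_a$ and completeness $\sum_a p_a^i = 1$: the product telescopes to $1 - 2\sum_a p_a^i = 1 - 2 = -1$. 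Since both assignments respect the relations, they descend to $*$-algebra homomorphisms between the quotients, and since they are mutually inverse on generators they are mutually inverse isomorphisms.

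The main obstacle I anticipate is relation (2) of $\mcA(\mcG)$, the completeness $\sum_a p_a^i = 1$: unlike the other relations it is not a ``local'' rewriting of a single SynchBCS relation but must be synthesized from several (the $\widetilde{\mathsf{XOR}}$/product relation (4), the commutation relation (5), and the $i=j$ instances of the $\widetilde{\mathsf{AND}}$ relation (3) that force pairwise orthogonality of the images $(1-z_a^i)/2$). Concretely, one needs: from $z_a^i z_{a'}^i = z_a^i + z_{a'}^i - 1$ deduce $p_a^i p_{a'}^i = 0$ for $a \neq a'$ (using that $V(a,a'|i,i)=0$ by the synchronous condition), and then show that a family of pairwise-orthogonal projections whose associated reflections multiply to $-1$ must sum to $1$. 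The latter is the identity $\prod_a(1 - 2p_a) = 1 - 2\sum_a p_a$ valid when the $p_a$ are pairwise orthogonal, combined with the hypothesis that the left side equals $-1$; I would write this expansion out carefully, as it is the crux of why the SynchBCS encoding faithfully captures ``exactly one output is selected.'' Everything else is routine verification, and I would relegate the straightforward relation checks to a sentence each.
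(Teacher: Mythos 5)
Your proposal is correct and uses the same isomorphism $p_a^i \leftrightarrow (1-z_a^i)/2$, the same list of relation checks, and correctly identifies the completeness relation $\sum_a p_a^i = 1$ as the only non-routine step. Where you differ from the paper is precisely on that step, and your route is in fact cleaner. You derive completeness in $\msB(\mcG)$ by expanding $\prod_a z_a^i = \prod_a(1 - 2p_a^i)$: once you know the $p_a^i := (1-z_a^i)/2$ are pairwise orthogonal (from the $i=j$ instances of the $\widetilde{\mathsf{AND}}$ relation, with $p_{a'}^i p_a^i = 0$ obtained by taking adjoints or invoking relation (5)), the expansion collapses to $1 - 2\sum_a p_a^i$, and equating with $\prod_a z_a^i = -1$ gives $\sum_a p_a^i = 1$ in one line. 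The paper instead expands $1 = \sum_{\mathbf{e}\in\{\pm1\}^n}\prod_a \tfrac{1+e_a z_a^i}{2}$ as a sum of indicator polynomials and kills the terms one at a time: the multi-minus terms vanish by orthogonality, and the all-plus term vanishes by a pairing argument using $\prod_{a\in S} z_a^i + \prod_{a\in S^c} z_a^i = 0$. Both arguments are valid; yours is more direct, and notably is the \emph{same} product expansion the paper itself uses for the reverse direction $\varphi(z_a^i) = 1 - 2p_a^i$, so the paper's proof is asymmetric where yours is symmetric. One plausible reason the paper chose the indicator-polynomial route is that it recurs later in the approximate version of this isomorphism (Proposition \ref{prop:B_to_A}), where the exact cancellations become $O(\epsilon)$ estimates; your cleaner argument does not transfer quite as painlessly to that setting, since it relies on the product collapsing exactly rather than term-by-term bounds.
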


\begin{proof}
We begin by describing the $*$-homomorphism $\phi:\mcA(\mcG)\arr \msB(\mcG)$. Define the
function on the generators ${p_{a}^{i}}\mapsto (1-{z_{a}^{i}})/{2}$. This function extends
to a $*$-homomorphism $\phi$ on $\mcA(\mcG)$. We now check that it descends to a
$*$-homomorphism to $\msB(\mcG)$. First, we note that each $\phi({p_{a}^{i}})$ is an
orthogonal projection since ${z_{a}^{i}}^*={z_{a}^{i}}$, and ${z_{a}^{i}}^2=1$, for all
$a\in \mcO$. The $\widetilde{\mathsf{AND}}$ relation together with the relation
${z_{a}^{i}}z_{a'}^i=z_{a'}^i{z_{a}^{i}}$ for all $a,a'\in \mcO$, implies that
$1-{z_{a}^{i}}-z_{b}^j+{z_{a}^{i}}z_{b}^j=0$ whenever $\mcV(a,b|i,j)=0$, and thus
$$\phi({p_{a}^{i}})\phi({p_{b}^{i}})=\frac{(1-{z_{a}^{i}})}{2}\frac{(1-z_{j,b})}{2}=0$$ is
satisfied whenever $\mcV(a,b|i,j)=0$.

For each $i\in \mcI$ with $|\mcO|=n$, observe that the unit $1$ can be expanded as the sum
of indicator polynomials in the variables ${z_{a}^{i}}$, giving us
\begin{align}\label{eqn:id_func} 1=\sum_{(e_1,\ldots,e_n)\in \{\pm1\}^n}\prod_{a\in
\mcO}\frac{(1+e_a{z_{a}^{i}})}{2}. \end{align} However, upon enforcing the orthogonality
relations we note that $\prod_{a\in \mcO}\frac{(1+e_a{z_{a}^{i}})}{2}=0$, whenever there
is a pair $a,a'\in \mcO$ with $e_a=e_{a'}=-1$. Thus, there are only two cases we need to
consider.

The first is when $e_a=1$ for all $a\in \mcO$. In this case, we have the term
\begin{align}\label{eqn:badterm} \prod_{a\in
\mcO}\frac{(1+z_a^i)}{2}&=\frac{1}{2^n}\left(\sum_{S\subseteq [n]}\prod_{a\in
S}z_a^i\right). \end{align}

Recalling that the rule predicate relation shows $\prod_{a\in \mcO}{z_{a}^{i}}=-1$, we
observe that \begin{equation}\label{eqn:miracle} \prod_{a\in S}z_a^i+\prod_{a\in
[n]\setminus S}z_a^i=0, \end{equation} for any $S\subseteq [n]$, by recalling that each
${z_a^i}^2=1$ by the self-adjoint unitary relations. It follows that 
\cref{eqn:badterm} is $0$ because each subset $S\subseteq [n]$ is in bijection with its
complementary subset $S^c=[n]\setminus S$, and so by equation \cref{eqn:miracle} each term
with an $S$ product cancels out with the term for $S^c$ product.

In the other case, the remaining terms are those with exactly one $a\in \mcO$ with
$e_a=-1$. For this case, let $f_a^i=(1-{z_{a}^{i}})/2$ and $f_{a'}^i=(1-z_{a'}^i)/2$ and
observe that $f_{a}^i$ and $f_{a'}^{i}$ are self-adjoint orthogonal projections with
$f_a^i f_{a'}^i=0$, and therefore $f_a^i(1-f_{a'}^i)=f_a^i-f_a^if_{a'}^i=f_{a}^i$. Then,
noting $1-f_{a'}^i=(1+z_{a'}^i)/2$, it follows that $f_a^i\left[ \prod_{a'\neq a}
(1-f_{a'}^i)\right]=f_a^i$. These being the only remaining terms in
\cref{eqn:id_func}, we see that \begin{equation*} 1=\sum_{a\in \mcO}
\frac{(1-{z_{a}^{i}})}{2}\prod_{a'\neq a}\frac{(1+z_{a'}^i)}{2}=\sum_{a\in \mcO}
\frac{(1-{z_{a}^{i}})}{2}=\sum_{a\in \mcO} \phi({p_{a}^{i}}), \end{equation*} for all
$i\in \mcI$, as desired.

On the other hand, consider the $*$-homomorphism $\varphi:\msB(\mcG)\arr \mcA(\mcG)$
defined by extending the function ${z_{a}^{i}}\mapsto (1_\mcA-2{p_{a}^{i}})$. Recalling
that ${p_{a}^{i}}{p_{b}^{i}}=0$ for all $a\neq b$, we see that \begin{align*} \prod_{a\in
\mcO} \varphi({z_{a}^{i}})=\prod_{a\in \mcO}(1-2{p_{a}^{i}})= \sum_{S\subset
\mcO}(-2)^{|S|}\prod_{a\in S}{p_{a}^{i}}= 1_\mcA+(-2)\sum_{a \in \mcO}{p_{a}^{i}}=-1_\mcA,
\end{align*} by recalling the completeness relation in $\mcA(\mcG)$. Now, if
$\mcV(a,b|i,j)=0$ then we have that ${p_{a}^{i}}{p_{b}^{j}}=0$, hence \begin{align*}
\text{$\widetilde{\mathsf{AND}}$}(\varphi({z_{a}^{i}}),\varphi({z_{b}^{i}}))&=\frac{1}{2}\big(1_\mcA+(1+2{p_{a}^{i}})+(1_\mcA+2{p_{b}^{j}})\\
&-(1_\mcA+2{p_{a}^{i}})(1_\mcA+2{p_{b}^{j}})\big)\\ &=\frac{1}{2}\left(2\cdot
1_\mcA+2{p_{a}^{i}}+2{p_{b}^{j}}-2{p_{a}^{i}}-2{p_{b}^{j}}-4{p_{a}^{i}}{p_{b}^{j}}\right)\\
&=1_\mcA. \end{align*} Lastly, since $V(a,b|i,i)=0$, it follows that
$\varphi({z_{a}^{i}})\varphi({z_{b}^{i}})=(1_\mcA+2{p_{a}^{i}})(1_\mcA+2{p_{b}^{i}})=1_\mcA+2{p_{a}^{i}}+2{p_{b}^{i}}=\varphi({z_{b}^{i}})\varphi({z_{a}^{i}})$
for all $a\neq b$ as desired.

It remains to show that $\varphi$ and $\phi$ are mutual inverses. Observe that,
\begin{align*}
\varphi\left(\phi({p_{a}^{i}})\right)&=\varphi\left(\frac{1-{z_{a}^{i}}}{2}\right)=\frac{1}{2}\left(\varphi(1)-\varphi({z_{a}^{i}})\right)=\frac{1}{2}\left(1_\mcA-(1_\mcA-2{p_{a}^{i}})\right)={p_{a}^{i}}.
\end{align*} Similarly, \begin{align*}
\phi\left(\varphi({z_{a}^{i}})\right)&=\phi(1_\mcA-2{p_{a}^{i}})=\phi(1_\mcA)-2\phi({p_{a}^{i}})=1-2\frac{(1-{z_{a}^{i}})}{2}={z_{a}^{i}},
\end{align*} thus $\varphi\circ \phi=id_{\mcA(\mcG)}$ and $\phi\circ
\varphi=id_{\msB(\mcG)}$, and the result follows. \end{proof}

Recall from an $\epsilon$-perfect strategy for a BCS nonlocal game with an arbitrary
state, our rounding result will give us an $O(\epsilon^{1/4})$-representation of the BCS
algebra in the $\|\cdot\|_f$-norm. To apply our rounding result in the synchronous algebra
case we need to ensure that $\epsilon$-representations of the synchronous algebra
$\mcA(\mcG)$ gives us $O(\epsilon)$-representation of the SynchBCS algebra in the
$\rho$-norm under the isomorphism in \cref{prop:iso}. We remark that we cannot directly
apply \cref{prop:stable_pres} here because the approximate representation is not
state-independent!

\begin{proposition}\label{prop:synch_to_bcs}
If $\psi$ is an $(\epsilon,\rho)$-representation of $\mcA(\mcG)$, then $\phi\circ \psi$ is
a $(O(\epsilon),\rho)$-representation of $\msB(\mcG)$, where the $*$-homomorphism
$\phi:\mcA(\mcG)\to \msB(\mcG)$ is described in \cref{prop:iso}.
\end{proposition}

\begin{proof}[Proof of \cref{prop:synch_to_bcs}]
It is straightforward to show that if $\|{F_a^i}^2-F_a^i\|_\rho\leq \epsilon$
and $\|{F_a^i}^*-F_a^i\|_\rho\leq \epsilon$, then $\|{X_a^i}^2-\Id\|_\rho\leq
4\epsilon$, and similarly $\|{X_a^i}^*-X_a^i\|_\rho\leq 2\epsilon$. So all that remains to show
that relations (1), (2), and (3) hold approximately as well.
If $\mcV(a,b|i,j)=0$ then $\|F_a^iF_b^j\|_\rho\leq \epsilon$, hence, we set
${X_i^a}=\Id-2F_a^i$. Note that
$\|\widetilde{\mathsf{AND}}({z_{a}^{i}},z_{b}^{j})-1\|_\rho=\|1-{z_{a}^{i}}-z_{b}^{j}+{z_{a}^{i}}z_{b}^{j}\|
_\rho$.
Then
\begin{equation}\label{eq:rem_rel}
\| \Id-{X_{a}^{i}}-X_{b}^{j}+{X_{a}^{i}}X_{b}^{j}\|_\rho =
4\left\| \frac{(\Id-{X_{a}^{i}})}{2}\frac{(\Id-X_{b}^{j})}{2}\right\|_\rho=
4 \|F_{a}^iF_b^j\|_\rho\leq 4\epsilon.
\end{equation}

Lastly,

\begin{align*} \left\|\prod_{a\in A}X_{a}^i+\Id\right\|_\rho=&\left\|\prod_{a\in
A}(\Id-2F_a^i)+\Id\right\|_\rho\\ =&\left\|\sum_{\alpha\subset
A}(-2)^{|\alpha|}\prod_{a\in \alpha}F_a^i+\Id\right\|_\rho\\
=&\left\|\sum_{|\alpha|=1}(-2)F_a^i+2\Id+\sum_{|\alpha|>1}(-2)^{|\alpha|}\prod_{a\in
\alpha}F_a^i\right\|_\rho\\ \leq&
2\|\Id-\sum_{a}F_a^i\|_\rho+\sum_{|\alpha|>1}2^{|\alpha|}\|\prod_{a\in
\alpha}F_a^i\|_\rho\\ \leq& 2\epsilon+\sum_{|\alpha|>1}2^{|\alpha|}\prod_{a''\in
\alpha\setminus\{a,a'\}}\|F_{a''}^i\|_{op}\|F_{a'}^iF_a^i\|_\rho\\ \leq&
2\epsilon+\sum_{|\alpha|>1}|2|^{|\alpha|}C^{|\alpha|-1}\|F_a^iF_{a'}^i\|_\rho\\
\leq &2\epsilon+2^{2|A|}C^{|A|-1}\epsilon\\ = &O(\epsilon), \end{align*}
where $C$ is the constant that bounds the operator norm of each $F_{a}^i$. Now, we ensure
that the commutation relation holds for all $a\neq a'$, $i\in \mcI$
\begin{align*}
\|X_a^iX_{a'}^i-X_{a'}^iX_a^i\|_\rho&=\|(\Id-2F_a^i)(\Id-2F_{a'}^i)-(\Id-2F_{a'}^i)(\Id-2F_a^i)\|_\rho\\
&\leq 4(\|F_a^iF_{a'}^i\|_\rho+\|F_{a'}^iF_{a}^i\|_\rho)\\ &= O(\epsilon),
\end{align*} as desired.
\end{proof}

\begin{remark}
Although \cref{prop:synch_to_bcs} works for arbitrary $(\epsilon,\rho)$-representations,
for our applications the approximate representation of $\mcA(\mcG)$ in the $\rho$-norm
already satisfy exactly many of the relations in the synchronous algebra
\cref{def:synch_alg}. This is because they begin as projective measurements over
satisfying assignments for some quantum strategy. Moreover, if ${X_a^i}=\Id-2F_a^i$ is the
$\pm1$-valued observable assigned to the orthogonal projection onto outcome $(i,a)$ then
under the isomorphism in \cref{prop:iso} the collection of observables
$\{X_a^i\}_{(i,a)\in \mcI\times \mcO}$ generate a self-adjoint unitary
$O(\epsilon^{1/2})$-representation of the SynchBCS algebra $\msB(\mcG)$. It is clear that
the relations self-adjoint unitary relations, as well as relations (2), and (3) in
\cref{def:synchBCS} hold exactly in this approximate representation due to the fact $\mcS$
is a representation of the PVM algebra. Therefore it only remains to check that relation (1)
in \cref{def:synchBCS} holds approximately, which follows from \cref{eq:rem_rel}. Lastly,
we see that since the projections satisfy the property in \cref{prop:proj_ATP}, their
corresponding observables satisfy the hypothesis of \cref{lem:frob_com}.
\end{remark}

\begin{corollary}\label{cor:synch_bcs_rounded}
If $\mcS$ is an $\epsilon$-perfect synchronous strategy for a
synchronous nonlocal game $\mcG$, then the corresponding
$(O(\epsilon^{1/4}),\rho)$-representation of the SynchBCS algebra is
$O(\epsilon^{1/2})$-tracial.
\end{corollary}

\cref{cor:synch_bcs_rounded} allows us to apply our rounding result to synchronous
algebras by considering the approximate representations of the SynchBCS algebra. In
particular, the representation coming from a near-perfect strategy for the synchronous game
yields a self-adjoint unitary state-dependent approximate representation of the SynchBCS
algebra. Hence, we can apply \cref{lem:rounding} to the SynchBCS algebra to obtain a
self-adjoint unitary state-independent approximate representation of the SynchBCS algebra.
Then, by the isomorphism, we can return the state-independent approximate representation of
the SynchBCS algebra to obtain a state-independent approximate representation of the
synchronous algebra. We summarize the result in the following proposition.

\begin{proposition}\label{prop:res2}
If $\mcS$ is an $\epsilon$-perfect strategy for a synchronous nonlocal game $\mcG$, then
restricted to a non-zero subspace of $H_B$, Bob's measurement operators are an
$O(\epsilon^{1/8})$-representation of the synchronous algebra $\mcA(\mcG)$.
\end{proposition}

\begin{proof}
The synchronous case is similar to the BCS case. By \cref{prop:synch_approx_rho} any
$\epsilon$-perfect strategy for a synchronous game $\mcG$ with reduced density matrix
$\rho$ gives a state-dependent $(O(\epsilon^{1/4},\rho)$-representation of the synchronous
algebra $\mcA(\mcG)$ that is $O(\epsilon^{1/2})$-tracial. By \cref{cor:synch_bcs_rounded},
this state-dependent approximate representation is a $O(\epsilon^{1/2})$-tracial
$(O(\epsilon^{1/4},\rho)$-representation of the synchBCS algebra $\msB(\mcG)$ that is
exact on the self-adjoint unitary relations. By applying \cref{lem:rounding}, we obtain a
$O(\epsilon^{1/8})$-representation in the $\|\cdot\|_f$-norm of the synchBCS algebra on a
subspace of $H_B$. Finally, by the $*$-isomorphism described in \cref{prop:iso} combined
with \cref{prop:stable_pres}, we obtain a state-independent
$O(\epsilon^{1/8})$-representation of the synchronous algebra.
\end{proof}

Our final task in this subsection is to show that approximate representations of the
synchronous algebra is close to near-perfect quantum strategies.

\begin{proposition}\label{prop:synch_alg}
If $\phi$ is a bounded $\epsilon$-representation of the synchronous algebra $\mcA(\mcG)$
on a Hilbert space $H$, then there is $O(\epsilon^2)$-perfect synchronous strategy using a
maximally entangled state.
\end{proposition}

\begin{proof}
Suppose that $\phi$ is an $\epsilon$-representation of the synchronous algebra on a
Hilbert space $H$. By the stability of the PVM algebra $\mcA_{PVM}^{(\mcI,\mcO)}$
\cref{cor:pvm_stab}, there exists a constant $C_0\geq 0$ and orthogonal projections
$\{\{\Pi_a^i:a\in A\}:i\in I\}$, such that $\sum_{a} {\Pi_a^i}=\Id$ for all $i\in I$,
${\Pi_a^i}{\Pi_b^i}=0$ for $a\neq b$, for all $i\in I$, and
$\|{\Pi_a^i}-\phi(p_a^i)\|_f=C_0\epsilon$, for all $i\in I$, $a\in A$. Moreover, by
replacing the almost-PVMs with the genuine PVMs the replacement lemma shows that new PVMs
still approximately satisfy the rule relations. More precisely, by \cref{lem:replacement}
there exists a constant $C_1>0$ such that $\|{\Pi_a^i}{\Pi_b^j}\|_f=C_1\epsilon$ whenever
$\mcV(a,b|i,j)=0$. Now, consider the quantum strategy where Alice employs the PVMs
$\{\{P_a^i={\Pi_a^i}^\top:a\in A\}:i\in I\}$, Bob employs PVMs $\{\{Q_b^j={\Pi_a^i}:b\in
A\}:j\in I\}$, and they use a shared maximally entangled state $|\tau\rangle \in H\otimes
H$. Given this strategy, the probability of losing on question pair $(i,j)$ is at most
\begin{align*}
\sum_{a,b:\mcV(a,b|i,j)=0}\langle\psi|{P}_{a}^i\otimes
{Q}_b^j|\psi\rangle &=\sum_{a,b:\mcV(a,b|i,j)=0}\widetilde{\tr}({P_{a}^i}^\top Q_b^j)\\
&=\sum_{a,b:\mcV(a,b|i,j)=0}\|\Pi_{a}^i \Pi_b^j\|^2_f\\
&\leq|\mcO|^2C_1^2\epsilon^2.
\end{align*}
It follows that the strategy is $O(\epsilon^2)$-perfect.
\end{proof}

\begin{corollary}\label{cor:synch2}
For any $\epsilon$-perfect
quantum strategy $\mcS$ for a synchronous nonlocal game, there is an
$O(\epsilon^{1/4})$-perfect quantum strategy $\tilde{\mcS}$ using a maximally entangled
state $|\tilde{\psi}\rangle$, such that each measurement in $\tilde{\mcS}$ is at most
$O(\epsilon^{1/8})$-away from the measurement in $\mcS$ with respect to $\|\cdot\|_f$ on
the local support of $|\tilde{\psi}\rangle$ on $H_B$.
\end{corollary}

\subsection{XOR nonlocal games}

Unlike the case of BCS and synchronous games, XOR nonlocal games do not admit
non-classical perfect quantum strategies \cite{CHTW10}. However, in many cases, there are
quantum strategies for XOR games that can achieve higher winning probabilities than the
best classical strategies \cite{Slof11,CHTW10}. Nevertheless, XOR games have an affiliated
finitely presented $*$-algebra $\msC(\mcG)$, called the XOR algebra, for which optimal
quantum strategies correspond to representations.

An XOR game is a nonlocal game where Alice and Bob are given questions $i\in\mcI_A=[m]$
and $j\in \mcI_B=[n]$ according to a probability distribution $\pi(i,j)$, and they respond
with 1-bit answers $a\in \mcO_A=\{0,1\}$ and $b\in \mcO_B=\{0,1\}$. The rule predicate for the game is
determined by the XOR of the answer bits. For any XOR game, we can describe the predicate
by an $m\times n$ $\{0,1\}$-matrix $T$ with entries $(T)_{i,j}=t_{ij}$, so that
\begin{equation}
\mcV(a,b|i,j)=\begin{cases} &1,\text{ if $a\oplus b=t_{ij}$}, \\ &0, \text{ otherwise}
\end{cases}.
\end{equation}
Letting $w_{ij}=(-1)^{t_{ij}}\pi(i,j)$, we obtain the \textbf{cost matrix} $W$ of an XOR
game. Given the cost matrix of a game, one can conveniently express the \textbf{bias} of a
strategy $\mcS$ consisting of $\pm1$-valued observables $\{Y_1,\ldots, Y_m\}$ for Alice,
$\{X_1,\ldots, X_n\}$ for Bob, and a vector state $|\psi\rangle\in H_A\otimes H_B$, as
\begin{equation}\label{eqn:xor_bias}
\beta(\mcG;\mcS)=\sum_{i=1,j=1}^{m,n} w_{ij}\langle \psi|Y_i\otimes X_j|\psi\rangle.
\end{equation}
The supremum over all quantum strategies $\mcS$ gives the \textbf{optimal bias} denoted by
$\beta_q(\mcG)$ for the XOR game $\mcG$.

A result of Tsirelson implies that the quantum value of an XOR game can be computed using
a semi-definite program \cite{CHTW10,Whe06,CSUU08}. Unlike for BCS games, this
characterization often makes computing the entangled value of an XOR game computationally
tractable. Also, using the formulation of Tsirelson it was shown in \cite{Slof11} that if
a strategy $\mcS$ for an XOR game is optimal, then the measurement observables satisfy the
following relation
\begin{equation}\label{eqn:xor_rig}
\sum_{j=1}^n w_{ij}X_j\lambda=r_i\lambda\overline{Y}_i,
\end{equation}
for all $i\in [m]$. Here, the collection $\{r_i\}_{i\in [m]}$ are the \textbf{marginal row
biases} associated with the questions, and $\lambda$ is the square root of the reduced
density matrix of the strategy quantum state $|\psi\rangle$ on $H_B$. From this relation,
one can define the XOR-algebra associated to an XOR nonlocal game $\mcG$ in terms of
abstract relations resembling equation \cref{eqn:xor_rig} along with the $\pm 1$-valued
observable relations, see \cite{Slof11}.

\begin{definition}\label{def:xor_alg}
Let $\mcG$ be an XOR game with an $m\times n$ cost matrix $W$, and marginal row biases
$\{r_i\in \R:i\in[m]\}$. The \textbf{XOR algebra} $\mcC(\mcG)$ is a quotient of the
self-adjoint unitary algebra $\mcU_n$ (see \cref{eq:sau_alg}), subject to the additional
relations:
\begin{equation}
\left(\sum_{j=1}^n w_{ij}x_j\right)^2=r_i^2\cdot 1\text{ for all $1\leq i\leq m$}
\end{equation}
\end{definition}

The characterization of optimal strategies in terms of the semi-definite program was also
applicable in the approximate setting, in particular, it implies the following
approximate rigidity result.

\begin{theorem}[Theorem 3.1 in \cite{Slof11}]\label{thm:xor}
For every XOR game $\mcG$ there exists a collection of constants $r_i\geq 0$ such that if
$\mcS=(\{Y_i\}_{i=1}^m,\{X_j\}_{j=1}^n,|\psi\rangle)$ is an $\epsilon$-optimal strategy of
$\pm1$-valued observables, and $0\leq \epsilon \leq \frac{1}{4(m+n)}$, then
\begin{equation}\label{eq:norm} \left\|\left(\sum_{j=1}^n w_{ij}(\Id\otimes
X_j)-r_i(Y_i\otimes \Id)\right)|\psi\rangle\right\|= O(\epsilon^{1/4}), \end{equation} for
all $1\leq i\leq m$, and the constants hidden in the $O(\epsilon^{1/4})$ depends only on
the size of the question sets $m$ and $n$. \end{theorem}

In other words, \cref{thm:xor} establishes that near-optimal strategies are
state-dependent approximate representation of the XOR algebra (given in
\cref{def:xor_alg}). With this fact, we can establish the following result:

\begin{proposition}\label{prop:XOR_approx} Let
$\mcS=(\{Y_i\}_{i=1}^n,\{X_j\}_{j=1}^n,|\psi\rangle\in H_A\otimes H_B)$ be an
$\epsilon$-optimal strategy to an XOR game $\mcG$ where $|\psi\rangle$ has
reduced density matrix $\rho=\in \mcL(H_B)$, then the observables
$\{X_1,\ldots,X_n\}$ are an $(O(\epsilon^{1/4}),\rho)$-representation of the solution
algebra $\mcC(\mcG)$. Additionally, the
$(O(\epsilon^{1/4}),\rho)$-representation is $(O(\epsilon^{1/4}),\rho)$-tracial.
\end{proposition}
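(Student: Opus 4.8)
The plan is to verify the three defining relations of the solution algebra $\mcC(\mcG)$ for the operators $\{X_j\}_{j=1}^n$ in the $\rho$-norm, and then read off the tracial property from the second statement of lemma \ref{lem:frob_com}. Relations (1) and (2) — that each $X_j$ is self-adjoint and a unitary of order two — hold exactly, since the $X_j$ are $\pm1$-valued observables by hypothesis; so the entire content is relation (3), namely $\bigl\|\bigl(\sum_{j=1}^n g_{ij}X_j\bigr)^2 - c_i^2\Id\bigr\|_\rho \leq O(\epsilon^{1/4})$ for each $1\leq i\leq m$. The starting point is equation \eqref{eqn:frob_norm}, which tells us that $\bigl\|\sum_j g_{ij}X_j\lambda - \lambda c_i\overline{Y_i}\bigr\|_F \leq O(\epsilon^{1/4})$; abbreviate $A_i := \sum_j g_{ij}X_j$, so $A_i$ is self-adjoint with $\|A_i\|_{op}$ bounded by a constant $\sum_j|g_{ij}| \leq 1$, and the estimate reads $\|A_i\lambda - \lambda c_i\overline{Y_i}\|_F \leq O(\epsilon^{1/4})$.

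The main computation is to ``square'' this estimate. I would write
\begin{align*}
\|A_i^2\lambda - c_i^2\lambda\|_F
&\leq \|A_i(A_i\lambda - c_i\lambda\overline{Y_i})\|_F + \|c_i A_i\lambda\overline{Y_i} - c_i^2\lambda\|_F\\
&\leq \|A_i\|_{op}\,\|A_i\lambda - c_i\lambda\overline{Y_i}\|_F + c_i\,\|(A_i\lambda - c_i\lambda\overline{Y_i})\overline{Y_i}\|_F\\
&\leq O(\epsilon^{1/4}),
\end{align*}
where in the last inequality I have used that $\overline{Y_i}$ is unitary so right multiplication by it preserves $\|\cdot\|_F$, together with fact (ix), and that $c_i$ is a fixed constant depending only on $\mcG$. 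Since $\|A_i^2 - c_i^2\Id\|_\rho = \|(A_i^2 - c_i^2\Id)\lambda\|_F = \|A_i^2\lambda - c_i^2\lambda\|_F$ by definition of the $\rho$-norm, this is exactly relation (3) up to $O(\epsilon^{1/4})$. Combining with the exact relations (1) and (2), the operators $\{X_j\}$ form an $O(\epsilon^{1/4})$-representation of $\mcC(\mcG)$ in $\|\cdot\|_\rho$.

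Finally, for the tracial property I need $\|X_j\lambda - \lambda X_j\|_F \leq O(\epsilon^{1/4})$ for each $j$. This does not follow from a single application of the marginal-bias identity the way the BCS case did, because $X_j$ is not by itself paired with a single Bob observable in the optimality condition — only the weighted sums $A_i$ are. The cleanest route I see is to observe that optimality of $\mcS$ forces, for each $j$, a near-equality $\langle\psi|Y_{i(j)}\otimes X_j|\psi\rangle \geq 1 - O(\epsilon^{1/2})$ for a suitable index (or more carefully, that the collection of $X_j\lambda$ and $\lambda\overline{Y_i}$ are related by an invertible linear transformation whose entries are the $g_{ij}$, so that approximate commutation of the $\lambda\overline{Y_i}$ with $\lambda$ — which does follow from lemma \ref{lem:frob_com}(ii) applied to the pair $(Y_i, \cdot)$ — transfers back to the $X_j$); the anticipated obstacle is that inverting the cost matrix $G$ (or the relevant submatrix) costs a factor depending on $\|G^{-1}\|$, which is a constant of the game and hence harmless under big-$O$, but must be handled with care when $G$ is not square. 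Alternatively, and perhaps more simply, I would invoke lemma \ref{lem:frob_com}(i) directly: since $\mcS$ is $\epsilon$-optimal, the relevant bipartite inner products are $1 - O(\epsilon^{1/2})$, and item (i) of that lemma gives $\|X_j\lambda - \lambda X_j\|_F \leq O(\epsilon^{1/4})$, which is the claimed $(O(\epsilon^{1/4}),\lambda)$-tracial bound. The hard part is making precise which near-perfect correlation each $X_j$ participates in; everything else is a routine norm manipulation.
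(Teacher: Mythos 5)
Your verification of relation (3) is correct and is in substance the same computation as the paper's: you set $A_i:=\sum_j g_{ij}X_j$, split $A_i^2\lambda - c_i^2\lambda$ into two terms using $\overline{Y_i}^2=\Id$, and control each term by the marginal-bias estimate \eqref{eqn:frob_norm} together with $\|A_i\|_{op}\leq\sum_j|g_{ij}|$ and the unitary invariance of $\|\cdot\|_F$. The paper's decomposition is packaged slightly differently (it inserts $c_i\lambda\overline{Y_i}$ and multiplies by $\overline{Y_i}$ at a different point) but the inequalities used are the same.

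The tracial property is the genuine issue, and you are right to flag it as the hard part. Your second fallback route, that ``the relevant bipartite inner products are $1-O(\epsilon^{1/2})$'', is not correct in general: for a typical XOR game the individual correlations $\langle\psi|Y_i\otimes X_j|\psi\rangle$ are bounded well away from $1$ even at optimum (e.g.\ CHSH, where they are $\pm1/\sqrt{2}$), so lemma~\ref{lem:frob_com}(i) cannot be invoked directly on the pair $(Y_i,X_j)$. Moreover, while $A_i/c_i$ is self-adjoint, it is not a unitary, so the Cauchy--Schwarz step in the proof of lemma~\ref{lem:frob_com}, which crucially uses $\tr(\overline{X}\lambda\overline{X}\lambda)\leq 1$ for unitary $\overline{X}$, does not apply verbatim; applying it to the approximate unitary $A_i/c_i$ loses an exponent. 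Your first route (inverting the cost matrix so that the commutator bounds on the $A_i$ transfer to the $X_j$) is closer to viable but requires $G$ to have full column rank and inherits the constant $\|G^{+}\|$. The missing ingredient is the \emph{column} version of Slofstra's theorem~(3.1): there are marginal column biases $d_j\geq 0$ such that an $\epsilon$-optimal strategy also satisfies $\|(\sum_i g_{ij}(Y_i\otimes\Id)-d_j(\Id\otimes X_j))|\psi\rangle\|\leq O(\epsilon^{1/4})$. That pairs each $X_j$ individually with the Alice-side operator $\sum_i g_{ij}\overline{Y_i}/d_j$, after which the proof of lemma~\ref{lem:frob_com} (adapted to the approximately unitary Alice operator, which costs a constant but is quantitatively controllable once one also shows the column analogue of relation (3)) yields the commutator bound $\|X_j\lambda-\lambda X_j\|_F\leq O(\epsilon^{1/4})$. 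Be aware that the paper's own one-sentence justification (``follows from lemma~\ref{lem:frob_com} and equation~\eqref{eqn:frob_norm}'') is no more explicit than your sketch: the column biases and the non-unitarity of $A_i/c_i$ are exactly what need to be spelled out.
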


\begin{proof}
If $\mcS$ is an $\epsilon$-optimal strategy for the XOR game $\mcG$ and
$\rho=\lambda^*\lambda$ is the reduced density matrix of the state on $H_B$, since each
$X_j$ is a self-adjoint unitary, it only remains to show that the remaining relations in
\cref{def:xor_alg} hold approximately. In particular, we claim that
\begin{equation*}
\left\|r_i^2\Id-\left(\sum_{j=1}^n w_{ij}X_j\right)^2\right\|_\rho =
O(\epsilon^{1/4}),
\end{equation*} for all $1\leq i\leq m$.
This holds by \cref{thm:xor} and \cref{lem:frob_com}, as they show that in general
\begin{equation}\label{eqn:frob_norm}
\left\|\sum_{j=1}^n w_{ij}X_j\lambda-\lambda r_i\overline{Y_i}\right\|_F=
O(\epsilon^{1/4}).
\end{equation}
From this, it follows that
\begin{align*}
&\left\|r_i\Id-\left(\sum_{j=1}^n w_{ij}X_j\right)\right\|_f\\
=&\left\|r_i^2\lambda-\left(\sum_{j=1}^n w_{ij}X_j\right)^2\lambda\right\|_F\\ 
\leq & \left\|r_i^2\lambda-\sum_{j=1}^n
w_{ij}X_j r_i \lambda \overline{Y}_i \right\|_F+ \left\|\sum_{j=1}^n w_{ij}X_j
r_i \lambda \overline{Y}_i -\left(\sum_{j=1}^n
w_{ij}X_j\right)\left(\sum_{\ell=1}^n w_{i\ell}X_\ell\right)\lambda\right\|_F\\ \leq &
\left\|r_i^2\lambda\overline{Y}_i-\sum_{j=1}^n w_{ij}X_j r_i \lambda
\right\|_F+\sum_{j=1}^n |w_{ij}| \left\|X_j\left(r_i \lambda \overline{Y}_i
-\sum_{\ell=1}^n w_{i\ell}X_\ell\lambda\right)\right\|_F\\ \leq &
|r_i|\left\|r_i\lambda\overline{Y}_i-\sum_{j=1}^n w_{ij}X_j \lambda
\right\|_F+\sum_{j=1}^n |w_{ij}| \left\|r_i \lambda \overline{Y}_i -\sum_{\ell=1}^n
w_{i\ell}X_\ell\lambda\right\|_F
\end{align*}
is $O(\epsilon^{1/4})$ by equation \cref{eqn:frob_norm}. That this approximate
representation is $(O(\epsilon^{1/4}),\lambda)$-tracial follows easily from
\cref{lem:frob_com} and equation \cref{eqn:frob_norm}.
\end{proof}

\begin{corollary}\label{cor:XOR} If the state $|\psi\rangle$ in the
$\epsilon$-optimal strategy $\mcS$ for the XOR nonlocal game $\mcG$ is maximally
entangled then the observables $\{X_1,\ldots,X_n\}$ are a state-independent
$O(\epsilon^{1/4})$-representation of $\mcC(\mcG)$.
 \end{corollary}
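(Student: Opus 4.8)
The plan is to obtain this as an immediate specialization of Proposition \ref{prop:XOR_approx}, the only substantive point being that for a maximally entangled state the induced $\rho$-seminorm \emph{is} the little Frobenius norm. First I would recall that, by definition, a state $|\psi\rangle \in H_A \otimes H_B$ is maximally entangled exactly when its reduced density matrix on $H_B$ is $\rho = \Id/d$ with $d = \dim H_B$. In particular $\rho$ has full rank, so $\|\cdot\|_\rho$ is non-degenerate on all of $\mcL(H_B)$ and no passage to a proper subspace is needed. Taking $\lambda = \rho^{1/2} = \Id/\sqrt{d}$, for every $T \in \mcL(H_B)$ we have $\|T\|_\rho = \|T\lambda\|_F = \|T\|_F/\sqrt{d} = \|T\|_f$, which is precisely the identification recorded in the remarks following the definition of the state-induced seminorm and of maximal entanglement.

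With the equality $\|\cdot\|_\rho = \|\cdot\|_f$ in hand, the corollary follows directly. By Proposition \ref{prop:XOR_approx}, applied to the $\epsilon$-optimal strategy $\mcS$ with reduced density matrix $\rho = \lambda^2 \in \mcL(H_B)$, the operators $\{X_j\}_{j=1}^n$ generate an $O(\epsilon^{1/4})$-representation of the solution algebra $\mcC(\mcG)$ with respect to $\|\cdot\|_\rho$. Substituting $\|\cdot\|_\rho = \|\cdot\|_f$, the same map is an $O(\epsilon^{1/4})$-representation with respect to $\|\cdot\|_f$ in the sense of Definition \ref{defn:eps_rep}. Unwinding this: relations (1) and (2) of $\mcC(\mcG)$ hold exactly because each $X_j$ is a self-adjoint unitary, and relation (3), namely $\big(\sum_{j=1}^n g_{ij} X_j\big)^2 = c_i^2 \Id$, holds up to $O(\epsilon^{1/4})$ in $\|\cdot\|_f$ because it does in $\|\cdot\|_\rho$.

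There is essentially no obstacle beyond this bookkeeping. The one thing to flag is that Proposition \ref{prop:XOR_approx}, and the surrounding discussion, is phrased after restricting Bob's operators to the support of $|\psi\rangle$ so that $\|\cdot\|_\rho$ is a genuine norm; but since a maximally entangled state has full-rank reduced density matrix, that support is all of $H_B$, so $\mcL(H_B)$ is already the correct ambient algebra and the statement of the corollary requires no modification. Equivalently, one can simply feed the conclusion of Proposition \ref{prop:XOR_approx} together with the norm identity $\|\cdot\|_{\Id/d} = \|\cdot\|_f$ straight into Definition \ref{defn:eps_rep}.
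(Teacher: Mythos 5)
Your argument is exactly the one the paper intends: the corollary is stated with no separate proof precisely because it is the specialization of Proposition \ref{prop:XOR_approx} obtained by observing that a maximally entangled state has reduced density matrix $\rho=\Id/d$, so $\|\cdot\|_\rho=\|\cdot\|_f$ (as noted in the remarks following the definitions of the state-induced semi-norm and of maximal entanglement). Your additional observation that full rank of $\rho$ makes the restriction to the support of $|\psi\rangle$ vacuous is a correct and useful bit of bookkeeping, entirely consistent with the paper.
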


To obtain the result in the general case, we apply our \cref{lem:rounding} to obtain the following result.

\begin{proposition}\label{prop:res3}
If $\mcS$ is an $\epsilon$-optimal strategy for an XOR nonlocal game $\mcG$, then
restricted to a subspace of $H_B$, Bob's measurement observables are a state-independent
$O(\epsilon^{1/8})$-representation of the XOR algebra $\mcC(\mcG)$.
\end{proposition}

\begin{proof}
For the XOR case, \cref{prop:XOR_approx} establishes that any $\epsilon$-perfect strategy
for an XOR game $\mcG$ results in an $O(\epsilon^{1/4})$-tracial
$(O(\epsilon^{1/4},\rho)$-representation of the XOR algebra $\mcC(\mcG)$. Again, because
the operators are $\pm 1$-valued observables, the approximate representation is exact for
the self-adjoint unitary relations. Hence, by \cref{lem:rounding} we obtain a
state-independent $O(\epsilon^{1/8})$-representation of the XOR algebra.
\end{proof}

Our last task is to determine the optimality of strategies arising from
$\epsilon$-representations in the $\|\cdot\|_f$-norm of the XOR algebra.

\begin{proposition}\label{prop:XOR_alg}
If $\phi$ is a bounded $\epsilon$-representations
of the solution algebra $\mcC(\mcG)$ on a Hilbert space $H$, then there is an
$O(\epsilon^2)$-optimal strategy for the corresponding XOR game using a maximally
entangled state.
\end{proposition}

\begin{proof}
Let $\phi$ be an $\epsilon$-representation of $\mcC(\mcG)$. Start by defining the
operators $\phi(x_j)$ for all $1\leq j\leq n$ on the Hilbert space $H_B$. We note that
theses may not be $\pm1$-valued observables, but they are close. In particular, by
\cref{lem:sa_un_stab}, we can find a nearby $\pm1$-valued observables $X_j$ for all $1\leq
j\leq n$ such that each $\|X_j-\phi(s_j)\|_f\leq 2\epsilon$. Let these self-adjoint
unitaries $\{X_1,\ldots,X_j\}$ be the observables in Bob's quantum strategy. To start
building Alice's strategy we first define
\begin{equation*}
Z_i=\frac{1}{r_i}\sum_j w_{ij}X_j^\top,
\end{equation*} for each $1\leq
i\leq m$. By the \cref{lem:replacement}, there exists a constant $K_0$ such that
$\|Z_i^2-\Id\|_f\leq K_0\epsilon$. Moreover, by noting that each $Z_i$ is self-adjoint by
construction, we have that $\|Z_iZ_i^*-\Id\|_f, \|Z_i^*Z_i-\Id\|_f$, and
$\|Z_i-Z_i^*\|_f$, are all at most $K_0\epsilon$ as well. Hence, applying
\cref{lem:sa_un_stab} again we obtain self-adjoint unitaries $Y_i$ such that $Y_i^2=\Id$,
$Y_i^*=Y_i$ and $\|Z_i^\top-Y_i\|_f\leq2K_0\epsilon$ for all $1\leq i \leq n$. Then, if
Alice's strategy consists of the operators $\{Y_1,\ldots, Y_m\}$ as defined for each
$1\leq i \leq m$, and they share a maximally entangled state $|\tau\rangle \in H_B\otimes
H_B$, we observe that
\begin{align*} |\beta_q(\mcG)-\beta(\mcS;\mcG)|&=\left|\sum_i r_i-
\sum_{ij}w_{ij}\langle \psi|Y_i\otimes X_j|\psi\rangle\right|\\
&\leq \sum_i r_i\left|1-\langle \psi|Y_i\otimes {Z_i}^\top|\psi\rangle\right|\\
&\leq \sum_i \frac{r_i}{2}\left|\tilde{\tr}(2\Id-2Y_i^\top Z_i)\right|\\
&\leq \sum_i \frac{r_i}{2}\|Y_i^\top-Z_i\|_f^2\\
&\leq 2n\max_i\{r_i\}K_0\epsilon^2
=O(\epsilon^2),
\end{align*} as desired.
\end{proof}

It remains to show that the measurement operators in the rounded strategy using the
maximally entangled state are not too far from the operators in the original
$\epsilon$-optimal strategy. Towards this point, in the proof of \cref{lem:rounding} we
see that on the support of the projection $P$, each unitary $\widetilde{X}_j$ is close to
the starting self-adjoint unitary $X_j$. In particular, this
distance depends on the initial approximate representation. In the XOR case, we see that
measurements obtained from \cref{lem:rounding} are at most $O(\epsilon^{1/8})$-away in the
$\|\cdot\|_f$-norm on the subspace $\widetilde{H}$. Lastly, in the proof of
\cref{prop:XOR_alg} it follows from stable replacement that whenever we obtain a strategy
for the corresponding nonlocal game from an $O(\epsilon')$-representation, the measurement
operators are never more than $O(\epsilon')$-away from the measurement operators in the
initial strategy.

\begin{corollary}\label{cor:xor}
For any $\epsilon$-optimal quantum strategy $\mcS$
for an XOR nonlocal game, there is an $O(\epsilon^{1/4})$-optimal quantum strategy
$\tilde{\mcS}$ using a maximally entangled state $|\tilde{\psi}\rangle$, such that each
measurement in $\tilde{\mcS}$ is at most $O(\epsilon^{1/8})$-away from the measurement in
$\mcS$ with respect to $\|\cdot\|_f$ on the local support of $|\tilde{\psi}\rangle$ on
$H_B$.
\end{corollary}


\section*{Aknowledgements}
The author would like to thank William Slofstra and Arthur Mehta for several helpful
discussions. They would also like to thank Yuming Zhao, Eric Culf, Taro Spirig, Denis
Rochette and the anonymous referees for valuable feedback on earlier drafts of this work.
This work was primarily completed while the author was at the University of Waterloo and
the Institute for Quantum Computing.



\newcommand{\etalchar}[1]{$^{#1}$}


\appendix

\section{On the unitary parts of a matrices with a certain restriction property}\label{sec:parts}

\begin{proof}[Proof of \cref{lem:unitary_res}]\label{pf:un_res}
To begin let $Y=PXP$ and for now suppose that $P$ has rank $k$ and that $P$ is diagonal in
the standard basis i.e.~$P=\Id_{k}\oplus \mathbf{0}_{d-k}$. Since $Y$ is supported on
$Im(P)$ we can write $Y=\begin{pmatrix} \tilde{Y} & \mathbf{0}_{k\times d-k} \\
\mathbf{0}_{d-k,k} & \mathbf{0}_{d-k,d-k} \end{pmatrix}$ for some self-adjoint matrix
$\tilde{Y} \in M_k(\C)$. Now take $\tilde{U}$ to be a unitary part of $\tilde{Y}$ on
$\C^k$. To extend $\tilde{U}$ to matrix on $\C^d$ we let $U=\tilde{U}\oplus \Id_{d-k}$. In
addition to $U$ being a unitary on $\C^d$, we observe that $U$ restricts to a unitary on
$Im(P)$. Since $\tilde{Y}=\tilde{U}|\tilde{Y}|$, all that remains is to verify is if $|Y|$
equals $\begin{pmatrix} |\tilde{Y}| & \mathbf{0}_{k\times d-k} \\ \mathbf{0}_{d-k,k} &
\mathbf{0}_{d-k,d-k} \end{pmatrix}$. With this in mind, let $\tilde{V}$ be the unitary in
$M_k(\C)$ that diagonalizes $\tilde{Y}$ (i.e.~$\tilde{Y}=\tilde{V}\tilde{D}\tilde{V}^*$
for a diagonal matrix $\tilde{D}\in M_k(\C)$). We observe that
\begin{align*} |Y|&=\bigg| \begin{pmatrix} \tilde{Y} & \mathbf{0}_{k\times d-k} \\
\mathbf{0}_{d-k,k} & \mathbf{0}_{d-k,d-k} \end{pmatrix}\bigg|\\ & =\bigg| \begin{pmatrix}
\tilde{V} & \mathbf{0}_{k\times d-k} \\ \mathbf{0}_{d-k,k} & \Id_{d-k}
\end{pmatrix}\begin{pmatrix} \tilde{D} & \mathbf{0}_{k\times d-k} \\ \mathbf{0}_{d-k,k} &
\mathbf{0}_{d-k,d-k} \end{pmatrix}\begin{pmatrix} \tilde{V}^* & \mathbf{0}_{k\times d-k}
\\ \mathbf{0}_{d-k,k} & \Id_{d-k} \end{pmatrix}\bigg| \\ &=\begin{pmatrix} \tilde{V} &
\mathbf{0}_{k\times d-k} \\ \mathbf{0}_{d-k,k} & \Id_{d-k}
\end{pmatrix}\bigg|\begin{pmatrix} \tilde{D} & \mathbf{0}_{k\times d-k} \\
\mathbf{0}_{d-k,k} & \mathbf{0}_{d-k,d-k} \end{pmatrix}\bigg|\begin{pmatrix} \tilde{V}^* &
\mathbf{0}_{k\times d-k} \\ \mathbf{0}_{d-k,k} & \Id_{d-k} \end{pmatrix}\\
&=\begin{pmatrix} \tilde{V} & \mathbf{0}_{k\times d-k} \\ \mathbf{0}_{d-k,k} & \Id_{d-k}
\end{pmatrix}\begin{pmatrix} |\tilde{D}| & \mathbf{0}_{k\times d-k} \\ \mathbf{0}_{d-k,k}
& \mathbf{0}_{d-k,d-k} \end{pmatrix}\begin{pmatrix} \tilde{V}^* & \mathbf{0}_{k\times d-k}
\\ \mathbf{0}_{d-k,k} & \Id_{d-k} \end{pmatrix}\\ &=\begin{pmatrix} |\tilde{Y}| &
\mathbf{0}_{k\times d-k} \\ \mathbf{0}_{d-k,k} & \mathbf{0}_{d-k,d-k} \end{pmatrix},
\end{align*} as required. Next, we verify that
\begin{equation*}
U|Y|=\begin{pmatrix}\tilde{U} & \mathbf{0}_{k\times d-k} \\ \mathbf{0}_{d-k,k} & \Id_{d-k}
\end{pmatrix}\begin{pmatrix}|\tilde{Y}| & \mathbf{0}_{k\times d-k} \\ \mathbf{0}_{d-k,k} &
\mathbf{0}_{d-k,d-k}\end{pmatrix} =\begin{pmatrix} \tilde{U}|\tilde{Y}| &
\mathbf{0}_{k\times d-k} \\ \mathbf{0}_{d-k,k} & \mathbf{0}_{d-k,d-k}\end{pmatrix}=Y.
\end{equation*}

In the case that $P$ is not diagonal in the standard basis there exists a unitary $W\in
M_d(\C)$ so that $WPW^*$ is diagonal. In this case, we repeat the steps above with
$WYW^*=\begin{pmatrix} \tilde{Y} & \mathbf{0}_{k\times d-k} \\ \mathbf{0}_{d-k,k} &
\mathbf{0}_{d-k,d-k} \end{pmatrix}$. Doing so provides us with a unitary $U$ satisfying
$U|WYW^*|=WYW^*$ which is equivalent to $W^*UW|Y|=Y$. In this case, we take the unitary
part of $Y$ to be $U'=W^*UW$. All that remains is to show that $U'$ restricts to a unitary
on $Im(P)$. Recall that the non-zero eigenvectors $\{|v_1\rangle,\ldots,|v_k\rangle\}$ of
$P$ are an orthonormal basis for $Im(P)$. Then, for any $1\leq i \leq k$ we have

\begin{equation*}
U'|v_i\rangle=U'W^*|e_i\rangle=W^*U|e_i\rangle=\sum_{j=1}^k
\gamma_jW|e_j\rangle=\sum_{j=1}^k\gamma_j|v_j\rangle\in Im(P),
\end{equation*}
where $\sum_{i=1}^k|\gamma_j|^2=1$ for all $1\leq i \leq k$. Moreover, the columns of $U'$
restricted to this basis are orthonormal, completing the proof.
\end{proof}

\end{document}